\numberwithin{equation}{section}
\newtheorem{thm}{Theorem}[section]
\newtheorem{lemma}[thm]{Lemma}
\newtheorem{corollary}[thm]{Corollary}
\newtheorem{definition}[thm]{Definition}
\newtheorem{rem}[thm]{Remark}
\newcommand{\be}{\begin{equation}}
\newcommand{\ee}{\end{equation}}
\newcommand{\bea}{\begin{eqnarray*}}
\newcommand{\eea}{\end{eqnarray*}}
\newcommand{\mR}{\mathbb{R}}
\newcommand{\mB}{\mathcal{B}}
 \DeclareMathOperator*{\argmin}{arg\,min}
\newcommand{\la}{\langle}
\newcommand{\ra}{\rangle}
\newenvironment{keyword}{\smallskip\noindent{\bf Keywords.}
                          \hskip\labelsep}{}
\begin{document}
\date{}
\title{ \textbf {Restricted $q$-Isometry Properties Adapted to
Frames for Nonconvex $l_q$-Analysis \footnote{This work is supported
by Key project of NSF of China under number 11531013 and NSF of China under number 11171299.
\newline J. Lin is with the Department of Mathematics, City University of Hong Kong, Kowloon, Hong Kong, China (Email: jhlin5@hotmail.com).
\newline
S. Li is with the Department of Mathematics, Zhejiang University, Hangzhou, China. He is the  corresponding author (E-mail: songli@zju.edu.cn).}}
\author{Junhong Lin and Song Li }
} \maketitle
\begin{abstract}
This paper discusses reconstruction of signals from few measurements
in the situation that signals are sparse or approximately sparse in
terms of a general frame via the $l_q$-analysis optimization with
$0<q\leq 1$. We first introduce a notion of restricted $q$-isometry
property ($q$-RIP) adapted to a dictionary, which is a natural
extension of the standard $q$-RIP, and establish a generalized
$q$-RIP condition for approximate reconstruction of signals via the
$l_q$-analysis optimization. We then determine how many random,
Gaussian measurements are needed for the condition to hold with high
probability. The resulting sufficient condition is met by fewer
measurements for smaller $q$ than when $q=1$.

The introduced generalized $q$-RIP is also useful in compressed data
separation. In compressed data separation, one considers the problem
of reconstruction of signals' distinct subcomponents, which are
(approximately) sparse in morphologically different
dictionaries, from few measurements. With the notion of generalized
$q$-RIP, we show that under an usual assumption that the
dictionaries satisfy a mutual coherence condition, the $l_q$ split
analysis with $0<q\leq1 $ can approximately reconstruct the distinct
components from fewer random Gaussian measurements with small $q$
than when $q=1$.

\begin{keyword}
Compressed sensing, Restricted isometry property, Frames,
$l_q$-analysis, Sparse recovery, Data separation.
\end{keyword}
\end{abstract}

\section{Introduction}
\subsection{Background} Recovery of signals which are (approximately) sparse in
terms of a dictionary from few measurements is one of the major
subjects in compressed sensing. Suppose that we observe data from
the model
$$y=Af,$$ where $A\in\mR^{m\times n}$ (with $m<n$) is a known measurement
matrix. Our goal is to reconstruct the unknown signal $f$ based on
$y$ and $A$.

In standard compressed sensing \cite{CT06,CRT06b,D06b}, one assumes
that $f$ is sparse in the standard coordinate basis. A vector $v$ is
$s$ sparse if it has at most $s$ nonzero entries. If the measurement
matrix $A$ satisfies a restricted isometry property (RIP) condition
$\delta_{cs}\leq C$ (see e.g. \cite{CRT06b,CZ12} and the references
therein), one can recover a sparse signal $f$ by solving an
$l_1$-minimization problem
$$\min\limits_{\tilde{f}\in{\mR^n}}\|\tilde{f}\|_1\quad\mbox{subject
to}\quad A\tilde{f}=y.\eqno{(L_{1})}$$ Recall that a matrix $A$ is
said to satisfy the RIP \cite{CT06} of order $s$ if there is some
$\delta\in[0, 1)$ such that, for all $x$ with $\|x\|_0\leq s$, we
have
$$(1-\delta)\|x\|_2^2\leq\|Ax\|_2^2\leq(1+\delta)\|x\|_2^2.$$
The infimum of all possible $\delta$ satisfying the above inequality, denoted as $\delta_s$, is the so-called RIP constant of order $s$.
Many types of random measurement matrices such as Gaussian matrices
or Sub-Gaussian matrices have the RIP constant $\delta_s\leq\delta$
with overwhelming probability provided that $ m\geq
C\delta^{-2}s\log (n/s)$ \cite{CT06,BDDW08,MPT08,RV08}. Based on its
RIP guarantees, with high probability, $(L_1)$ can recover every $s$
sparse vector from $O(s\log (n/s))$ random measurements.

One alternative way of finding the unknown signal proposed in the
literature is to solve
$$\min_{\tilde{f}\in{\mR^n}}\|\tilde{f}\|_q\quad\mbox{subject
to}\quad A\tilde{f}=y. \eqno{(L_{q})}$$ Reconstructing sparse
signals via $(L_q)$ with $0<q<1$ has been considered in a series of
papers (see e.g. \cite{CS08,SCY08,FL09,DDFG10,GN07} and the
references therein) and some of the virtues are highlighted
recently. The $l_q$-strategy offers an advantage in that it requires
fewer measurements in numerical experiments \cite{C07}, with random and nonrandom Fourier measurements.
 Chartrand and Staneva \cite{CS08} showed that if
$A$ is an $m\times
 n$ Gaussian matrix, every $s$ sparse vector $f$ can be exactly
 recovered by solving $(L_{q})$ with high probability provided
 $$m\geq C_1(q)s + qC_2(q)s \log(n/s),$$
 where $C_1(q)$ and $C_2(q)$ are bounded and given explicitly there. The
 dependence of $m$ on the number $n$ of columns vanishes for
 $q\rightarrow0.$ In their proof, they used a restricted $q$-isometry property, namely
 $$(1-\delta)\|v\|_2^q\leq\|Av\|_q^q\leq(1+\delta)\|v\|_2^q$$ for
 all $s$ sparse vectors $v\in\mR^n$ and $0<q\leq 1.$

In this paper, the signal is assumed to be (approximately) sparse in
terms of a frame $D$, i.e., $D^*f$ is (approximately) sparse. Some
examples in practice are Gabor frames \cite{FS98} in radar and
sonar, curvelet frames \cite{CD04} and undecimated wavelet
transforms \cite{M08,CDOS12} in image processing, etc. Recall that
the columns of $D\in\mR^{n\times d}$ ($n\leq d$) form a frame for
$\mR^n$ with frame bounds $0<\mathcal{L}\leq\mathcal{U}<\infty$ if
\be\label{Frames}\forall f\in\mR^n,\quad \mathcal{L}\|f\|_2^2\leq\|D^*f\|_2^2\leq \mathcal{U}\|f\|_2^2.\ee
If $\mathcal{U}=\mathcal{L}$, then $D$ is a tight frame for $\mR^n$.
One way of recovery such signals is via the following $l_q$ analysis (see e.g. \cite{EMR07} and the reference therein)
with $0<q\leq 1$:
$$\hat{f}=
\argmin_{\tilde{f}\in{\mR^n}}\|D^*\tilde{f}\|_q\quad\mbox{subject
to}\quad A\tilde{f}=y. \eqno{(P_{q})}$$ We remark that $(P_{q})$ may
have more than one minimizer, and our results of this paper hold for
any solution of $(P_{q})$. Here, for simplicity of statements, we
assume that $(P_{q})$ has a unique minimizer.
 Letting $D$ be a tight
frame, Cand\`{e}s et al. \cite{CENR11} showed that the solution
$\hat{f}$ of ($P_1$) satisfies $$\|\hat{f}-f\|_2\leq
C_0\frac{\|D^*f-(D^*f)_{[s]}\|_1}{\sqrt{s}},$$ provided that $A$
satisfies an $D$-RIP condition. Here we denote $x_{[s]}$ to be the
vector consisting of the $s$ largest coefficients of $x\in\mR^d$ in
magnitude:
$$x_{[s]}=\argmin_{\|\tilde{x}\|_0\leq s}\|x-\tilde{x}\|_2.$$ Recall
that a measurement matrix $A$ is said to obey the restricted
isometry property adapted to $D$ (abbreviated as $D$-RIP)
\cite{CENR11} of order $s$ if there exists some $\delta \in (0,1)$ such that
\be\label{StandardDRIP}(1-\delta)\|Dv\|_2^2\leq\|ADv\|_2^2\leq(1+\delta)\|Dv\|_2^2\ee
holds for all $s$ sparse vectors $v\in\mR^d$. The $D$-RIP constant of order $s$, denoted as $\delta_s$, is the infimum of all possible $\delta$ satisfying the above inequality.
Note that the $D$-RIP
is a natural extension of the standard RIP. Under the assumption
that $A$ satisfies an $D$-RIP condition, for general $0<q\leq 1$,
\cite{ACP11,LL11} provided results on recovery of signals which are
compressible in terms of a tight frame $D$ via $(P_q).$ Liu et al.
\cite{LML12} considered the problem of recovering signals which are
compressible in a general frame $D$ via dual frame based
$l_1$-analysis model. Nam et al. \cite{NDEG13} proposed a new signal
model called cosparse analysis model with corresponding
reconstruction methods. In a recent paper, Rauhut and Kabanava \cite{KR13} provided
both uniform and nonuniform recovery guarantees from Gaussian random
measurements, which requires $O(s\log (d/s))$ measurements, for
cosparse signals based on $(P_1)$ when $D$ is a frame.

The $D$-RIP is a special case of a more general definition given in
\cite{BD09,LD08}. Until now, nearly all good constructions of $D$-RIP measurement matrices uses randomness. For any choice of $D\in\mR^{n\times d}$, if $A$ is
populated with independent and identically distributed (i.i.d.)
random entries from a Gaussian or Sub-Gaussian distribution, then
with high probability, $A$ will satisfy the $D$-RIP of order $s$ as
long as $m = O(s \log(d/s))$ \cite{CENR11,BD09,LD08}. In fact, given
any matrix $A$ satisfying the traditional RIP, by applying a random
sign matrix one obtains a matrix satisfying the $D$-RIP \cite{KW11}.
Based on its $D$-RIP guarantees, the aforementioned results show
that $(P_q)$ with $0<q\leq 1$ can guarantee approximately recovery
from $O(s\log(d/s))$ measurements for Sub-Gaussian matrices.


\subsection{Main contribution}
In this paper, we further develop theoretical results on $l_q$
analysis for approximate recovery of signals, that are approximately
sparse with respect to a general frame $D$. One of our main results shows that $(P_q)$ can
approximately recover the unknown signal with high probability from
fewer measurements with small $q$ than that were needed in the
aforementioned results. Concretely, we have the following result.
\begin{thm}\label{thm2}
Suppose that we observe data from the model $y=Af.$  Let
$D\in\mR^{n\times d}$ be a frame with frame bounds
$0<\mathcal{L}\leq\mathcal{U}<\infty$. Let $A$ be an $m\times n$
matrix whose entries are i.i.d. random distributed normally with
mean zero and variance $\sigma^2.$ Then there exist constants
$C_1(q)$ and $C_2(q)$ such that whenever $0<q\leq 1$ and
$$m\geq
C_1(q)\kappa^{\frac{q}{2-q}}s+qC_2(q)\kappa^{\frac{2q}{2-q}}s\log(d/s),
  \quad \kappa=\mathcal{U}/\mathcal{L},$$ with
probability exceeding $1-1/\binom{d}{s}$, any solution $\hat{f}$ of
$(P_q)$ satisfies
  $$\|\hat{f}-f\|_2\leq
  C\frac{\|D^*f-(D^*f)_{[s]}\|_q}{s^{1/q-1/2}}.$$
\end{thm}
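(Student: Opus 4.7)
The natural route is the two-step template now standard in this area. I would first introduce a generalized $q$-RIP adapted to $D$: say, $A$ satisfies the $D$-$q$-RIP of order $s$ with constant $\delta$ if
\[
(1-\delta)\|Dv\|_2^q \leq \|ADv\|_q^q \leq (1+\delta)\|Dv\|_2^q \qquad \text{for every $s$-sparse } v\in\mR^d,
\]
then prove (a) a deterministic recovery guarantee: if $\delta$ is below an explicit threshold at order $cs$, then any minimizer of $(P_q)$ satisfies the desired error bound; and (b) a random-matrix estimate: whenever $m$ meets the hypothesis of the theorem, a Gaussian $A$ satisfies this $D$-$q$-RIP of order $cs$ with probability at least $1-1/\binom{d}{s}$.

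\textbf{Deterministic half.} Put $h=\hat f-f$, so that $Ah=0$. Decompose $D^*h=\sum_j h_{T_j}$ into size-$s$ blocks ordered by coefficient magnitude, with $T_0$ indexing the $s$ largest coefficients of $D^*f$. From the minimality $\|D^*\hat f\|_q^q\le \|D^*f\|_q^q$ and the $q$-quasi-triangle inequality available for $0<q\le 1$, a Stechkin-type computation yields the familiar cone bound
\[
\sum_{j\ge 2}\|h_{T_j}\|_2^q \;\lesssim\; s^{1-q/2}\Bigl(\|h_{T_0}\|_q^q+2\|(D^*f)_{T_0^c}\|_q^q\Bigr).
\]
Using the identity $h=D\tilde D^*h$ with $\tilde D$ the canonical dual frame (so that $Ah=AD\tilde D^*h$) and applying the $D$-$q$-RIP at its upper end to the tail blocks and at its lower end to $\tilde D^*h_{T_0\cup T_1}$ as in \cite{CENR11,KR13,LL11}, I would bound $\|D(\tilde D^*h)_{T_0\cup T_1}\|_2$, and finally convert to $\|h\|_2$ via the lower frame inequality $\mathcal L\|h\|_2^2\le \|D^*h\|_2^2$. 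The conversions absorb $\kappa$ into constants but not into any exponents, yielding the stated error inequality.

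\textbf{Probabilistic half and main obstacle.} For a fixed size-$s$ support $T$, the set $S_T=\{Dv:\mathrm{supp}(v)\subseteq T,\ \|Dv\|_2=1\}$ lies in the $\sqrt{1/\mathcal L}$-ball of an $s$-dimensional subspace, hence admits an $\epsilon$-net of cardinality at most $(C\sqrt{\kappa}/\epsilon)^s$ by a volumetric argument. Unioning over the $\binom{d}{s}$ supports produces a net $\mathcal N_\epsilon$ of size $\binom{d}{s}(C\sqrt{\kappa}/\epsilon)^s$. For each fixed $u\in\mathcal N_\epsilon$ the entries of $Au$ are i.i.d.\ $N(0,\sigma^2\|u\|_2^2)$, so $\|Au\|_q^q$ concentrates exponentially around $m\sigma^q c_q\|u\|_2^q$ via a Bernstein-type bound for sub-exponential variables; a union bound over $\mathcal N_\epsilon$ controls the entire net as soon as
\[
m \;\gtrsim\; \delta^{-2}\Bigl(s\log\bigl(C\sqrt{\kappa}/\epsilon\bigr)+s\log(d/s)\Bigr).
\]
Lifting the control from $\mathcal N_\epsilon$ to all of $\bigcup_T S_T$ exploits the $q$-quasi-triangle inequality, which forces the perturbation from the net to enter as $\epsilon^q$ and be absorbed into $\delta$. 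The most delicate step is the two-variable optimization in $(\epsilon,\delta)$: balancing the entropy cost $s\log(\sqrt\kappa/\epsilon)$ against the $\epsilon^q$-absorption requirement, and then against the $s\log(d/s)$ term, must be calibrated so that the $\sqrt\kappa$ factor reshuffles into the exponents $q/(2-q)$ for the leading $s$ term and $2q/(2-q)$ for the $s\log(d/s)$ term; this bookkeeping is where I expect the proof to do most of its real work.
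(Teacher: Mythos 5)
Your two-step architecture (deterministic recovery under a generalized $q$-RIP, then a net-plus-concentration argument for Gaussian $A$) is the same as the paper's, but two of your specific choices break down, and both failures concern exactly the points where the general-frame case differs from the tight-frame case. First, in the deterministic half you order and block the coefficients of $D^*h$ (which is what the minimality of $\|D^*\hat f\|_q$ and the Stechkin cone bound control), yet you expand $h=D\tilde D^*h$ and apply the $(D,q)$-RIP to restrictions of $\tilde D^*h$. For a non-tight frame these are different vectors: the index sets $T_j$ defined by the decay of $D^*h$ say nothing about the decay of $\tilde D^*h$ on those sets, so neither the tail estimate $\sum_{j\ge2}\|(\tilde D^*h)_{T_j}\|_2^q\le a^{q/2-1}\|(\tilde D^*h)_{T^c}\|_q^q$ nor the cone constraint is available for the vector you feed into the RIP. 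The paper avoids this by working with the RIP adapted to the canonical dual $D^{\dag}$ and the identity $h=D^{\dag}D^*h$, so that the RIP is applied to restrictions of $D^*h$ itself and the frame bounds of $D^{\dag}$ (namely $1/\mathcal U$, $1/\mathcal L$) convert $\|D^{\dag}(D^*h)_{T_{01}}\|_2$ back to $\|D(D^*h)_{T_{01}}\|_2$ and then, via a Cauchy--Schwarz/quadratic-inequality step, to $\|(D^*h)_{T_{01}}\|_2$.

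Second, your accounting for $\kappa$ cannot produce the stated exponents. You assert that the conversions in the deterministic half absorb $\kappa$ "into constants but not into any exponents" and then try to recover $\kappa^{q/(2-q)}$ and $\kappa^{2q/(2-q)}$ from the covering number $(C\sqrt\kappa/\epsilon)^s$; but an entropy term $s\log(\sqrt\kappa/\epsilon)$ only yields an additive $s\log\kappa$ in $m$, which is logarithmic in $\kappa$ and cannot be "reshuffled" into polynomial powers. (Moreover the $\sqrt\kappa$ in the net is unnecessary: one covers the unit sphere of the at-most-$s$-dimensional subspace $\mathrm{span}(D_T)$ directly, with $(3/\epsilon)^s$ points, exactly as in the paper's Lemma 2.4, which is entirely $\kappa$-free.) In the paper the polynomial $\kappa$-dependence comes from the deterministic half: the $(D^{\dag},q)$-RIP threshold is $\rho^{1-q/2}(\rho^{2/q-1}+1)^{q/2}\kappa^q(1+\delta_a)<1-\delta_{s+a}$ with $\rho=s/a$, so the extra factor $\kappa^q$ must be beaten by taking the auxiliary block size $a\asymp\kappa^{2q/(2-q)}s$; the RIP order $s+a$ and hence the measurement count then inherit these powers of $\kappa$. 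Your fixed size-$s$ blocks correspond to $\rho=1$, for which the threshold condition is unsatisfiable once $\kappa$ is moderately large, so the block size must be left as a free parameter and tuned against $\kappa$ --- this tuning, not the net, is where the exponents $q/(2-q)$ and $2q/(2-q)$ are generated.
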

\begin{rem}
  \begin{enumerate}
  \item $C_1(q)$ and $C_2(q)$ are bounded positive numbers which will be explicitly given in the proof
  of the theorem.
    \item The dependence of $m$ on the number $d$ and the condition number $\kappa$ of $D$
     vanishes for $q\rightarrow0.$ As a result, the required
     measurements become $Cs$ when $q$ is small, which are fewer than that were needed in the previous
     results.
  \item Using the proof techniques developed in \cite{SL12}, one can
  improve the success probability.
  \end{enumerate}
\end{rem}
The proof of this result is based on a notion of $(D,q)$-RIP and general $(D,q)$-RIP recovery result.
It is a natural extension of the standard $q$-RIP in \cite{CS08}
(and for $q=1$, in \cite{D06b}):
\begin{definition}[($D$,q)-RIP] Let $D$ be an $n\times d$ matrix. A measurement matrix $A$ is said to obey the
restricted $q$-isometry property adapted to $D$ (abbreviated as
($D$,q)-RIP) of order $s$ with constant $\delta\in[0,1)$ if
\be\label{DRIP}(1-\delta)\|Dv\|_2^q\leq\|ADv\|_q^q\leq(1+\delta)\|Dv\|_2^q\ee
holds for all $s$ sparse vectors $v\in\mR^d$. The $(D,q)$-RIP
constant $\delta_s$ is defined as the smallest number $\delta$ such
that (\ref{DRIP}) holds for all $s$ sparse vectors $v\in\mR^d$.
\end{definition}
In section 2, we first establish approximate recovery results for $l_q$ analysis
with the assumption that the measurement matrix $A$ satisfies an
 $(D^{\dag},q)$-RIP condition. Here $D^{\dag}=(DD^*)^{-1}D$ is the
canonical dual frame of $D$. Subsequently, we prove Theorem \ref{thm2} by showing how many
random Gaussian measurements are sufficient for the condition to
hold with high probability. The resulting sufficient condition is
met by fewer measurements for smaller $q$ than when $q=1$.

Our approach ($P_q$) with $q=1$ is slightly different with the
$l_1$-analysis approach considered in \cite{LML12,F14}, i.e.,
\be\label{L1Analysis} \argmin_{\tilde{f}}\|(D^{\dag})^*\tilde{f}\|_1
\quad\mbox{s.t.}\quad A\tilde{f}=y.\ee Using $D$ (instead of
$D^{\dag}$) as analysis operator is preferable in some certain
circumstances, e.g., when $D$ is known while $D^{\dag}$ is hard to
be known or computed in high dimension sparse recovery, or when $D$
is of special structure which has fast computation algorithm while
$D^{\dag}$ is not (see e.g. \cite{DH02}). We also note that it is
hard to verify the ($D,q$)-RIP (or $D$-RIP) for a deterministic
measurement matrix, and for certain random measurement matrice,
verifying the ($D^{\dag},q$)-RIP (or $D^{\dag}$-RIP) is almost the
same as verifying the ($D,q$)-RIP (or $D$-RIP).

The proof techniques in this paper may shed some lights on improving
the existing $D$-RIP recovery results. Our proof for the
$(D^{\dag},q)$-RIP guaranteeing results in Section 2.4 shows that a
suitable $(D^{\dag},q)$-RIP condition implies the $l_q$ null space
property of order $s$ relative to $D$ ($D$-NSP$_q$) of the
measurement matrix. Recall that the matrix $A$ is said to satisfy
the $D$-NSP$_q$ of order $s$ \cite{GN03,CDD09,S11,ACP11,F14} if
there exists a constant $\theta$ with $\theta\in(0,1)$ such that for
all $h\in \ker A$ and for all sets $T\subset \{1,\cdots d\}$ with
cardinality $|T|\leq s,$
$$\|D_T^*h\|_q^q \leq \theta \|D^*_{T^c}h\|_q^q.$$
Here, $D_T$ is the submatrix of $D$ formed from the columns of $D$ indexed by $T$.
The smallest value of the constant $\theta$ in the above is referred to
as the $D$-NSP$_q$ constant.
The importance of the null space property is that it is the necessary
and sufficient condition under which $l_q$ recovery is exact for
$s$-sparse signals for the case $D=I$ (see e.g. \cite{GN07,CDD09}). By developing a tighter
relationship between the $D$-RIP constant and the $D$-NSP constant, one can improve the RIP condition for the exact sparse
recovery (see e.g. \cite{CZ12} for the case $D=I$ and \cite{LL11} for the case of tight frames).
For general frames case, it would be interesting to pursue a tighter
relationship between the $D^{\dag}$-RIP constant and the $D$-NSP constant, and then
establish an $D^{\dag}$-RIP recovery result for ($P_1$) using the
approach of this paper. However, this is beyond the scope of this
paper.

\subsection{Compressed
data separation} Numeral examples show that signals of interest
might be classified as multimodal data, i.e., being composed of
distinct subcomponents. One common task is to separate such data
into appropriate single components for further analysis (e.g.
\cite{ESQD05,CCS10,DK13,CDOS12}). In \cite{LLS13TIT}, the authors
considered data separation from few measurements, and showed that
the two distinct subcomponents, assumed to be approximately $s_1$
and $s_2$ sparse in two dictionaries $D_1\in\mR^{n\times d_1}$ and
$D_2\in\mR^{n\times d_2}$ respectively, can be approximately
reconstructed by solving the $l_1$ split analysis, provided that the
measurement matrix satisfies an $D$-RIP condition and the two
dictionaries satisfy a mutual coherence (between the different
dictionaries) condition. Based on the $D$-RIP analysis, under a
mutual coherence condition between the two dictionaries, the $l_1$
split analysis can approximately reconstruct the distinct components
from $O((s_1+s_2)\log \frac{d_1+d_2}{s_1+s_2})$ random sub-Gaussian
measurements. Refer to \cite{LLS13TIT} and the references therein
for more details on compressed data separation.

Our second contribution of this paper is to establish further theoretical
recovery results for compressed data separation via $l_q$ split
analysis from random Gaussian measurements. With the $(D,q)$-RIP
introduced in this paper, and  under an usual assumption that the
two dictionaries satisfy a mutual coherence condition, we show that
the $l_q$ split analysis with $0<q\leq1 $ can approximately
reconstruct the distinct components from fewer random Gaussian
measurements with small $q$ than that are needed in previous
results.
Recall that the mutual coherence between two dictionaries
\cite{LLS13TIT} is defined as follow.
\begin{definition} Let $D_1=(d_{1i})_{1\leq i\leq d_1}$ and $D_2=(d_{2j})_{1\leq j\leq d_2}$.
The mutual coherence between $D_1$ and $D_2$ is defined as
$$\mu_1=\mu_1(D_1;D_2)=\max_{i,j}|\la d_{1i},d_{2j}\ra|.$$\end{definition}
We have the following result, whose proof will be given in Section 3 by applying a general theorem
for compressed data separation where unknown signals are composed of $\iota$ ($\iota\geq2$) components that are sparse in terms of $r$ tight frames $D_1,D_2,\cdots,D_r$.
\begin{thm}\label{thm3}
Suppose that we observe data from the model $y=A(f_1+f_2).$  Let
$D_1\in\mR^{n\times d_1}$ and $D_2\in\mR^{n\times d_2}$ be two
arbitrary tight frames for $\mR^n$ with frame bound $1$,
respectively. Let $A$ be an $m\times n$ matrix whose entries are
i.i.d. random distributed normally with mean zero and variance
$\sigma^2.$ Fix positive integers $s_1$ and $s_2$. Assume that the
mutual coherence $\mu_1$ between $D_1$ and $D_2$ satisfies
$$\mu_1(s_1+s_2)\left(\lceil(2^{3q/2}5)^{\frac{2}{2-q}}\rceil+1\right)\left(\frac{1}{8\cdot
5^{2/q}}+1\right)<1.$$
 Then there exist constants $C_1(q)$ and $C_2(q)$ such
that whenever $0<q\leq 1$ and
$$m\geq
C_1(q)(s_1+s_2)+qC_2(q)(s_1+s_2)\log
\left(\frac{d_1+d_2}{s_1+s_2}\right),$$ with probability exceeding
$1-1/\binom{d_1+d_2}{s_1+s_2}$, any solution $(\hat{f}_1,\hat{f}_2)$
to the $l_q$ Split analysis \be\label{SAA} (\hat{f}_1,\hat{f}_2)=
\argmin\limits_{\tilde{f}_1,\tilde{f}_2\in{\mR^n}}\|D_1^*\tilde{f}_1\|_q^q+\|D_2^*\tilde{f}_2\|_q^q
\quad\mbox{s.t.}\quad A(\tilde{f}_1+\tilde{f}_2)=y,\ee obeys
\be\label{R2}\|\hat{f_1}-f_1\|_2+\|\hat{f_2}-f_2\|_2\leq
C_1\frac{\left(\|D_1^*f_1-(D_1^*f_1)_{[s_1]}\|_q^q+\|D_2^*f_2-(D_2^*f_2)_{[s_2]}\|_q^q\right)^{1/q}}{(s_1+s_2)^{1/q-1/2}}.\ee
\end{thm}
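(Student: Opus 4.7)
The plan is to reduce the two-component separation problem to a single $l_q$-analysis recovery problem in a lifted $2n$-dimensional space, then combine the $(D,q)$-RIP theory of Section 2 with a coherence-based disentanglement estimate. Form the stacked signal $F=(f_1,f_2)\in\mR^{2n}$, the concatenated measurement matrix $\tilde{A}=[A,\,A]\in\mR^{m\times 2n}$, and the block-diagonal dictionary $\tilde{D}=\operatorname{diag}(D_1,D_2)\in\mR^{2n\times(d_1+d_2)}$. Because each $D_i$ is Parseval, so is $\tilde{D}$; then (\ref{SAA}) becomes $\hat F=\argmin_{\tilde F}\|\tilde{D}^*\tilde F\|_q^q$ subject to $\tilde{A}\tilde F=y$, with natural analysis sparsity $s=s_1+s_2$, and $\|\hat f_1-f_1\|_2+\|\hat f_2-f_2\|_2\le\sqrt{2}\,\|\hat F-F\|_2$.

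Next, I would run the null-space step from the proof of Theorem \ref{thm2} in the lifted setting. Writing $h=\hat F-F=(h_1,h_2)$, the optimality of $\hat F$ gives the cone condition
\[
\|(\tilde{D}^* h)_{T^c}\|_q^q\le\|(\tilde{D}^* h)_T\|_q^q+2\bigl(\|D_1^*f_1-(D_1^*f_1)_{[s_1]}\|_q^q+\|D_2^*f_2-(D_2^*f_2)_{[s_2]}\|_q^q\bigr),
\]
where $T=T_1\cup(T_2+d_1)$ and $T_i$ indexes the $s_i$ largest entries of $D_i^*f_i$. The constraint $\tilde{A}h=0$ reduces to $A(h_1+h_2)=0$, so Section 2 delivers Gaussian $(D^{\dagger},q)$-RIP estimates controlling $\|h_1+h_2\|_2$ through $A$ and the effective frame $[D_1\ D_2]\in\mR^{n\times(d_1+d_2)}$; Theorem \ref{thm2} gives the measurement count $m\ge C_1(q)(s_1+s_2)+qC_2(q)(s_1+s_2)\log\tfrac{d_1+d_2}{s_1+s_2}$ and probability $1-1/\binom{d_1+d_2}{s_1+s_2}$, uniformly over $(s_1+s_2)$-sparse supports.

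The main obstacle is that these estimates only control $\|h_1+h_2\|_2$, whereas (\ref{R2}) requires $\|h_1\|_2+\|h_2\|_2$. This is exactly the role of the mutual-coherence hypothesis: I would establish a separation estimate $\|h_1+h_2\|_2^2\ge c\,(\|h_1\|_2^2+\|h_2\|_2^2)$ with $c>0$ depending on $\mu_1$, $s_1+s_2$ and $q$, valid for every $h$ satisfying the cone condition. The mechanism is the Parseval identity $h_i=D_iD_i^*h_i$, which rewrites $\la h_1,h_2\ra=\la D_1^*h_1,\,(D_1^*D_2)\,D_2^*h_2\ra$; every entry of the cross-Gram matrix $D_1^*D_2$ is bounded by $\mu_1$, and splitting each $D_i^*h_i$ into its head on $T_i$ and tail on $T_i^c$ and then using the cone to dominate tails by heads in the $l_q$ quasi-norm produces an explicit separation constant. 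The peculiar numerical factor in the hypothesis $\mu_1(s_1+s_2)\bigl(\lceil(2^{3q/2}5)^{2/(2-q)}\rceil+1\bigr)\bigl(\tfrac{1}{8\cdot 5^{2/q}}+1\bigr)<1$ arises from carefully tracking the $l_q$ triangle defect for $q<1$ through this argument, and I expect this to be the most technical part of the proof. Chaining the separation estimate with the $(D^{\dagger},q)$-RIP bound on $\|h_1+h_2\|_2$ and with the cone condition then produces the claimed error bound (\ref{R2}).
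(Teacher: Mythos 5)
Your setup (stacking $F=(f_1,f_2)$, the block-diagonal Parseval dictionary, and the cone condition on $\Psi^*h=(D_1^*h_1,D_2^*h_2)$ from optimality) matches the paper's reformulation. But the core of your plan --- first use Section 2 to control $\|h_1+h_2\|_2$, then use coherence to pass from $\|h_1+h_2\|_2$ to $\|h_1\|_2+\|h_2\|_2$ --- has a structural gap. The recovery machinery of Section 2 does not ``deliver'' a bound on $\|h_1+h_2\|_2$: since $A(h_1+h_2)=0$, the only way the RIP yields anything is in combination with a cone/null-space condition on the analysis coefficients of $h_1+h_2$ with respect to the frame $[D_1\ D_2]$, i.e.\ on $(D_1^*(h_1+h_2),D_2^*(h_1+h_2))$. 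The cone condition you actually have is on $(D_1^*h_1,D_2^*h_2)$, and the cross terms $D_1^*h_2$, $D_2^*h_1$ have no sparsity structure, so Theorem \ref{thm1}/\ref{thm2} cannot be invoked for the single vector $h_1+h_2$. Separately, the disentanglement inequality $\|h_1+h_2\|_2^2\ge c\,(\|h_1\|_2^2+\|h_2\|_2^2)$ ``for every $h$ satisfying the cone condition'' is false as stated in the compressible case: the cone condition carries the additive slack $2\|\Psi^*_{\Omega^c}f\|_q^q$, so one can take $h_1=-h_2=g$ with $\|D_i^*g\|_q^q$ of the order of that slack, making the left side zero while the right side is positive. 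Even in the exactly sparse case, the estimate $|\la h_1,h_2\ra|\le\mu_1\|D_1^*h_1\|_1\|D_2^*h_2\|_1$ involves the \emph{full} $\ell_1$ norms, whose conversion to $\ell_2$ costs $\sqrt{d_i}$ rather than $\sqrt{s_1+s_2}$; a condition of the form $\mu_1(s_1+s_2)(\cdots)<1$ cannot absorb that.

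The paper avoids both problems by never separating the two uses of the hypotheses. In its Lemma 3.3 the $(\bar D,q)$-RIP (for $\bar D=[D_1\,|\,D_2]$) is applied to the vectors $\bar D\Psi_{T_j}^*h=\sum_kD_kD_{kT_j^k}^*h_k$, which are images under $\bar D$ of genuinely sparse vectors, and the mutual coherence is used only to control the cross terms $\la D_1D_{1T_{01}^1}^*h_1,\,D_2D_{2T_{01}^2}^*h_2\ra$ in the expansion of $\|\bar D\Psi_{T_{01}}^*h\|_2^2$; there the relevant $\ell_1$ norms are supported on at most $s+a$ indices, which is exactly why the hypothesis scales like $\mu_1(s+a)$. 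This produces a bound on $\|\Psi_{T_{01}}^*h\|_2$ (hence on $\|h\|_2^2=\|h_1\|_2^2+\|h_2\|_2^2$ directly, by Parseval) in terms of $\|\Psi_{T^c}^*h\|_q^q$, and the proof then closes with the robust $\Psi$-NSP$_q$ and the cone condition, followed by the covering argument of Lemma \ref{lemma2} applied to $\bar D$ for the Gaussian measurement count. To repair your argument you would need to relocate the coherence step inside the RIP step, applied to the restricted pieces, rather than attempting a global separation estimate on $h_1,h_2$.
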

\begin{rem}
As $q$ becomes smaller, weaker mutual coherence condition and fewer measurements are needed to guarantee approximate recovery. In particular,
  letting $q \to 0$, the mutual coherence condition and the required measurements become $2\mu_1(s_1 + s_2)<1$ and $m= O(s_1+s_2),$ respectively.
\end{rem}
As we will see  in Section 3, Theorem \ref{thm2} can be generalized to the cases where signals $f$ are composed of general $\iota \in \mathbb{Z}^{+}$ distinct components.
To the best of our knowledge, our results may be the first of this kind for a general $\iota$.
For simplicity, we have restricted to the tight frames case. Note
that similar as Theorem \ref{thm2}, our result can be extend to the
non-tight frames case.

The proof is similar to that of Theorem \ref{thm2}. Under the
assumptions that the measurement matrix satisfies a generalized
$q$-RIP condition, and that the dictionaries satisfy a mutual
coherence condition, we first prove that the $l_q$ split analysis
with $0<q\leq1 $ can approximately reconstruct the distinct
components. Subsequently we determine how many random, Gaussian
measurements are sufficient for the generalized $q$-RIP condition to
hold with high probability. The resulting sufficient condition is
met by fewer measurements for smaller $q$ than when $q=1$. Such a
proof is given in Section 3.

\subsection{Notation}
For a vector $v\in\mR^d$, $\|v\|_0$ is the number of nonzero entries
of $v$. For any $q\in (0,\infty)$, denote
$\|u\|_q=\left(\sum_{j=1}^{d}|u_j|^q\right)^{1/q}$ and
$\|u\|_{\infty}=\max_{j}|u_j|.$ For $d\in\mathbb{N}$, denote $[d]$
to mean $\{1,2,\cdots,d\}.$ Given an index set $T\subset [d]$ and a
matrix $D\in\mR^{n\times d}$, $T^c$ is the complement of $T$ in
$[d]$, $D_T$ is the submatrix of $D$ formed from the columns of $D$
indexed by $T$.\footnote{We note that this notation will
occasionally be abused to refer to the $n\times d$ matrix obtained
by setting the columns of $D$ indexed by $T^c$ to zero. The usage
should be clear from the context, but in most cases there is no
substantive difference between the two.} For a matrix $D_1$, we
write $D_{1T}$ to mean $(D_1)_T$. Write $D^*$ to mean the conjugate
transpose of a matrix $D$, $D^T$ to mean the transpose of $D$, and
$D_T^*$ to mean $(D_T)^*$. For a vector $x\in\mR^d$, $x_{[s]}$
denotes the vector consisting of the $s$ largest entries of $x$ in
magnitude. $C>0$ (or $c$, $c_1$) denotes a universal constant that
might be different in each occurrence. $D^+$ denotes the
Moore-pseudo inverse of a matrix $D,$ and $\ker D$ denotes the null
space of $D$. For a frame $D$ with frame bounds $0<\mathcal{L}\leq
\mathcal{U}<\infty$, $D^{\dag}=(DD^*)^{-1}D$ is the canonical dual
frame. Note that $D^{\dag}=(D^*)^+,$ $D^{\dag}D^*=I,$ and the lower
and upper frame bound of $D^{\dag}$ is given by $1/\mathcal{U}$ and
$1/\mathcal{L}$, respectively. The smallest and largest eigenvalues of a symmetric matrix $B \in \mathbb{R}^{d \times d}$ are denoted by $\lambda_{\min}(B)$ and $\lambda_{\max}(B)$, respectively.

\section{Sparse recovery via $l_q$-analysis}
In this section we prove Theorem \ref{thm2}. We first show that if
the measurement matrix $A$ satisfies an $(D^{\dag},q)$-RIP condition, then
the unknown signal can be approximately recovered by solving the
$l_q$ analysis optimization. The following basic inequalities related to the $l_p$
(quasi)norm are useful for our proofs. For any vectors
$u,v\in\mR^N,$ one has
\be\label{BasicInequality} \|u\|_{p_2} \leq \|u\|_{p_1} \leq N^{1/p_1-1/p_2}\|u\|_{p_2},\quad 0<p_1\leq p_2\leq \infty \ee
and the following triangle inequality for $\|\cdot\|_q^q$ with $q\in(0,1]$:
\be\label{TriangleInequ}\|u+v\|_q^q\leq \|u\|_q^q + \|v\|_q^q .\ee

\subsection{Recovery results based on $(D^{\dag},q)$-RIP}
In this subsection, we give $(D^{\dag},q)$-RIP guarantee results on sparse recovery with frames from noisy measurements $y=Af+z$ via solving
the following $l_q$-analysis optimization \be\label{LQAnalysis}
\argmin_{\tilde{f}\in\mR^n} \|D^*\tilde{f}\|_q\quad \mbox{subject
to}\quad \|A\tilde{f}-y\|_r \leq \varepsilon, \ee
where $0<q\leq 1, 1 \leq r\leq \infty, \varepsilon\geq 0$ and the noise term $z\in\mR^m$ satisfies $\|z\|_r\leq \varepsilon$.
\begin{thm}\label{thm1} Let $0<q\leq 1, 1 \leq r\leq \infty, \varepsilon\geq 0$.
Suppose we observe data from the model $y=Af+z$ with $\|z\|_r\leq \varepsilon.$
  Let $D$ be a frame with frame bounds $0<\mathcal{L}\leq\mathcal{U}<\infty$. Fix positive integers $s,a$ with
  $s<a$.
  Assume that the
  $(D^{\dag},q)$-RIP constant of the measurement matrix $A$ satisfies
   \be\label{e_RIPC}
  \rho^{1-q/2}\left(\rho^{2/q-1}+1\right)^{q/2}\kappa^q(1+\delta_a)<1-\delta_{s+a},\quad
\ee
whree
\be
\label{rhokappa}\rho=\frac{s}{a},\qquad \kappa=\frac{\mathcal{U}}{\mathcal{L}}.
\ee
  Then any solution $\hat{f}$ to (\ref{LQAnalysis}) satisfies
  \be\label{Estimationh} \|\hat{f}-f\|_2\leq
  C_1\frac{\|D^*f-(D^*f)_{[s]}\|_q}{s^{1/q-1/2}} + C_2m^{1/q-1/r} \varepsilon \ee
  and
  \be \label{EstimationDh} \|D^*\hat{f} - D^*f\|_q^q \leq C_3\|D^*f-(D^*f)_{[s]}\|_q^q + C_4a^{1-q/2}m^{1-q/r}\varepsilon^q,\ee
   where $C_1,C_2,C_3$ and $C_4$ are
  positive constants (given explicitly in the proof) depending only on the
  $(D^{\dag},q)$-RIP constant $\delta_s$, $\delta_{s+a}$, $\rho$, $q$, $\mathcal{L}$ and $\kappa$.
\end{thm}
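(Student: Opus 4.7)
Set $h=\hat f-f$, $\eta=\|D^*f-(D^*f)_{[s]}\|_q^q$, and let $T_0$ index the $s$ largest entries of $D^*f$ in magnitude. I follow the standard $q$-RIP recovery template, adapted to the frame setting via systematic use of the identity $D^\dagger D^*=I_n$. From feasibility of $f$, optimality of $\hat f$, and the quasi-triangle inequality \eqref{TriangleInequ}, I first extract the cone condition $\|D^*_{T_0^c}h\|_q^q\le\|D^*_{T_0}h\|_q^q+2\eta$; from $\|z\|_r\le\varepsilon$ and \eqref{BasicInequality} I get the tube bound $\|Ah\|_q^q\le 2^q m^{1-q/r}\varepsilon^q$.

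Next I partition $T_0^c$ into consecutive blocks $T_1,T_2,\ldots$ of size $a$ in decreasing order of $|D^*h|$ and set $T_{01}=T_0\cup T_1$. The standard averaging trick gives $\|D^*_{T_j}h\|_2^q\le a^{q/2-1}\|D^*_{T_{j-1}}h\|_q^q$ for $j\ge 2$. Since $(D^*h)_{T_{01}}$ is $(s+a)$-sparse and each $(D^*h)_{T_j}$ $(j\ge 2)$ is $a$-sparse, the $(D^\dagger,q)$-RIP applies. Using $h=D^\dagger D^*h$, I decompose $AD^\dagger(D^*h)_{T_{01}}=Ah-\sum_{j\ge 2}AD^\dagger D^*_{T_j}h$ and combine the quasi-triangle, the head lower bound, the tail upper bounds, and $\|D^\dagger v\|_2\le\mathcal L^{-1/2}\|v\|_2$ to obtain
\[
(1-\delta_{s+a})\,\|D^\dagger(D^*h)_{T_{01}}\|_2^q\le 2^q m^{1-q/r}\varepsilon^q+(1+\delta_a)\mathcal L^{-q/2}\sum_{j\ge 2}\|D^*_{T_j}h\|_2^q.
\]

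The crux is connecting $\|D^\dagger(D^*h)_{T_{01}}\|_2$ back to $\|h\|_2$. Since $D^*h\in\mathrm{range}(D^*)$, one has $(I-D^*D^\dagger)D^*h=0$, which forces $(I-D^*D^\dagger)(D^*h)_{T_{01}}=-(I-D^*D^\dagger)(D^*h)_{T_{01}^c}$; together with the upper frame bound $\|D^*u\|_2\le\sqrt{\mathcal U}\,\|u\|_2$ this gives the pivotal inequality
\[
\|D^*_{T_{01}}h\|_2\le\sqrt{\mathcal U}\,\|D^\dagger(D^*h)_{T_{01}}\|_2+\|D^*_{T_{01}^c}h\|_2.
\]
Substituting this, the lower frame bound $\|h\|_2\le\mathcal L^{-1/2}\|D^*h\|_2$, the cone/sorting bounds, and the conversion $\|D^*_{T_0}h\|_q^q\le s^{1-q/2}\|D^*_{T_{01}}h\|_2^q$ into the chain above produces a self-referential bound of the form
\[
\|h\|_2^q\le K_1\|h\|_2^q+K_2\, m^{1-q/r}\varepsilon^q+K_3\, a^{q/2-1}\eta,
\]
where $K_1$ is proportional to $\kappa^q\rho^{1-q/2}(\rho^{2/q-1}+1)^{q/2}(1+\delta_a)/(1-\delta_{s+a})$. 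Hypothesis \eqref{e_RIPC} is exactly what forces $K_1<1$, so solving for $\|h\|_2$ yields \eqref{Estimationh}; and \eqref{EstimationDh} then follows by substituting the head estimate into the cone consequence $\|D^*h\|_q^q\le 2\|D^*_{T_0}h\|_q^q+2\eta$.

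The main obstacle is this last, recursive step. The $\kappa^q$ in \eqref{e_RIPC} is the product of four separate frame-bound applications (twice $\sqrt{\mathcal U}$---from the non-tight-frame correction and from $\|D^*_{T_0}h\|_2\le\sqrt{\mathcal U}\,\|h\|_2$---and twice $\mathcal L^{-1/2}$, from the tail $(D^\dagger,q)$-RIP and from reconstituting $\|h\|_2$ via $\|h\|_2\le\mathcal L^{-1/2}\|D^*h\|_2$). The combinatorial factor $(\rho^{2/q-1}+1)^{q/2}$ emerges from carefully balancing the $l_q$-to-$l_2$ conversion on $T_{01}$ against the sorting-based tail bound. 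Keeping all of these factors sharp simultaneously, so that the recursion closes under the condition stated in \eqref{e_RIPC}, is the delicate bookkeeping that occupies most of the technical work.
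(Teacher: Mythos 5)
Your overall architecture (cone and tube constraints from optimality and feasibility, block decomposition of $D^*h$ with the averaging tail bound, applying the $(D^{\dag},q)$-RIP through $h=D^{\dag}D^*h$, and closing a recursion whose contraction factor must be $<1$) matches the paper's. The genuine gap is in the step you yourself identify as the crux: passing from $\|D^{\dag}(D^*h)_{T_{01}}\|_2$, which the RIP chain controls, back to $\|D^*_{T_{01}}h\|_2$. Your projection identity $\|D_{T_{01}}^*h\|_2\le\sqrt{\mathcal U}\,\|D^{\dag}(D^*h)_{T_{01}}\|_2+\|D_{T_{01}^c}^*h\|_2$ is a correct inequality, but it is \emph{additive} in the tail, so the extra term $\|D_{T_{01}^c}^*h\|_2^q\le a^{q/2-1}\|D_{T^c}^*h\|_q^q$ enters your contraction factor as an additive $+\,c\,\kappa^{q/2}\rho^{1-q/2}$ on top of the $\kappa^{q}\Delta\rho^{1-q/2}$ coming from the RIP chain (here $\Delta=(1+\delta_a)/(1-\delta_{s+a})$ as in (\ref{delta})). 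The resulting $K_1$ is not ``proportional to'' the left-hand side of (\ref{e_RIPC}) in any way that the hypothesis controls. Concretely, for a tight frame ($\kappa=1$), $q=1$, $\rho=1/2$ and $\delta_a,\delta_{s+a}$ small, condition (\ref{e_RIPC}) reads $\sqrt{\rho(\rho+1)}\,\Delta\approx 0.87<1$ and is satisfied, while your contraction factor is at least $(\Delta+1)\sqrt{\rho}\approx\sqrt{2}>1$, so the recursion does not close. The assertion that ``Hypothesis (\ref{e_RIPC}) is exactly what forces $K_1<1$'' is therefore unsubstantiated, and the natural execution of your plan proves the theorem only under a strictly stronger RIP hypothesis than the one stated.

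The paper avoids this loss with a multiplicative, not additive, comparison (Lemma \ref{LemDT01}): it first converts $\|D^{\dag}D_{T_{01}}^*h\|_2\ge\mathcal U^{-1}\|DD_{T_{01}}^*h\|_2$ so that the RIP chain bounds $\|DD_{T_{01}}^*h\|_2$, and then uses Cauchy--Schwarz in the form $\|D_{T_{01}}^*h\|_2^4=|\la h,DD_{T_{01}}^*h\ra|^2\le\|h\|_2^2\,\|DD_{T_{01}}^*h\|_2^2\le\mathcal L^{-1}\bigl(\|D_{T_{01}}^*h\|_2^2+(\sum_{j\ge2}\|D_{T_j}^*h\|_2^q)^{2/q}\bigr)\|DD_{T_{01}}^*h\|_2^2$, solving the resulting quadratic inequality in $\|D_{T_{01}}^*h\|_2^2$. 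This is what produces the factor $2^{-q/2}\bigl(1+\sqrt{1+4\kappa^{-2}\Delta^{-2/q}}\bigr)^{q/2}$ in the null-space constant $\theta$ of (\ref{e_thet}), and it is precisely the algebra of $\theta<1$ in Lemma \ref{LemNSP} that generates the factor $(\rho^{2/q-1}+1)^{q/2}$ and makes the smallness condition \emph{equivalent} to (\ref{e_RIPC}) --- not, as you suggest, a balancing of the $l_q$-to-$l_2$ conversion against the sorting-based tail bound. To repair your argument, replace the projection step by this Cauchy--Schwarz/quadratic step; the rest of your outline then goes through essentially as in the paper.
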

\begin{rem}
\begin{enumerate}
\item For $q=1$, Liu et al. \cite{LML12}
  considered the problem of recovering signals which are
compressible in a general frame $D$ via $l_1$-analysis, with the
assumption that the measurement matrix $A$ satisfies an
$\tilde{D}$-RIP condition, where $\tilde{D}$ is a general dual frame
of $D$. Note that $D^{\dag}$ is the canonical dual frame of $D$, and
Theorem \ref{thm1} might be extended to
$(\tilde{D},q)$-RIP recovery results where $\tilde{D}$ is a general dual frame. However, since in most cases, $A$ is a random matrix, using a general dual frame $\tilde{D}$ would not lead to any advantage.
\item The above theorem requires that the measurement matrix $A$ satisfies an $(D^{\dag},q)$-RIP condition (\ref{e_RIPC}).
From the coming subsection, we will see that such a condition is met by setting $a=O(s)$ and $m=O(s+qs\log n).$
In this case, the last term of (\ref{Estimationh}) is roughly $C(s+qs\log n)^{1/q-1/r}.$
\end{enumerate}
\end{rem}
The proof of the above theorem involves several lemmas. We postpone the proof in Subsection 2.4. A direct consequence
of the theorem is the following corollary, which is useful for the proof of Theorem \ref{thm2}
\begin{corollary}\label{Cor1}
  Under the assumptions of Theorem \ref{thm1}, we further assume that $z=0$.
  Then any solution $\hat{f}$ to ($P_q$) satisfies
  $$ \|\hat{f}-f\|_2\leq
  C_1\frac{\|D^*f-(D^*f)_{[s]}\|_q}{s^{1/q-1/2}} ,$$
  where $C_1$ is given by Theorem \ref{thm1}.
\end{corollary}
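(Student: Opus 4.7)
The plan is essentially to specialize Theorem \ref{thm1} to the noiseless setting, so there is almost no new content to establish. Concretely, when $z=0$ the noise assumption $\|z\|_r \leq \varepsilon$ is trivially satisfied for any $r \in [1,\infty]$ and for $\varepsilon = 0$. With this choice of $\varepsilon$, the feasibility constraint $\|A\tilde f - y\|_r \leq \varepsilon$ appearing in (\ref{LQAnalysis}) reduces to $A\tilde f = y$, which is exactly the constraint in $(P_q)$. Therefore $(P_q)$ and (\ref{LQAnalysis}) have identical feasible sets and identical objectives, so their sets of minimizers coincide.

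The $(D^\dag,q)$-RIP hypothesis (\ref{e_RIPC}) is inherited verbatim from Theorem \ref{thm1} via the phrase ``under the assumptions of Theorem \ref{thm1}''. Hence Theorem \ref{thm1} applies directly to any minimizer $\hat f$ of $(P_q)$. Substituting $\varepsilon = 0$ into the error bound (\ref{Estimationh}) annihilates the term $C_2 m^{1/q-1/r}\varepsilon$, leaving precisely
$$\|\hat f - f\|_2 \leq C_1 \frac{\|D^* f - (D^* f)_{[s]}\|_q}{s^{1/q-1/2}},$$
with the same constant $C_1$ produced in the proof of Theorem \ref{thm1}. Because the reduction is cost-free, there is no genuine obstacle: the only thing to verify is the identification $(P_q) = $ (\ref{LQAnalysis})$\lvert_{\varepsilon=0}$, which is immediate from the definitions.
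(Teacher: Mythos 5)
Your proposal is correct and matches the paper's treatment: the paper states the corollary as a direct consequence of Theorem \ref{thm1}, obtained exactly as you describe by taking $\varepsilon=0$ so that (\ref{LQAnalysis}) reduces to $(P_q)$ and the noise term in (\ref{Estimationh}) vanishes. No further comment is needed.
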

\subsection{Random Gaussian measurements implying $(D^{\dag},q)$-RIP }
We next determine how many random Gaussian observations are needed
to guarantee that the $(D^{\dag},q)$-RIP condition in Theorem
\ref{thm1} holds with high probability. Let $A$ be an $m\times n$
matrix whose entries are i.i.d Gaussian random variables
$\mathcal{N}(0,\sigma^2).$ For a given $q$, let
$\varrho_q=\sigma^{q}2^{q/2}\Gamma(\frac{q+1}{2})/\sqrt{\pi}.$ Using a
same argument as that for \cite[Lemma 3.2]{CS08}, one can prove the
following result.
\begin{lemma}\label{lemma1}
Let $0<q\leq 1$, $A$ be an $m\times n$ matrix whose entries are
i.i.d Gaussian random variables $\mathcal{N}(0,\sigma^2).$ Then for
any fixed $x\in\mR^n,$ $\eta>0,$
\be\label{e6}\mathbb{P}\left(\left|\|Ax\|_q^q-
m\varrho_q\|x\|_2^q\right|\geq\eta
m\varrho_q\|x\|_2^q\right)\leq2\exp{\left(-\frac{\eta^2m}{2q\beta_q^2}\right)},\ee
where
$$\varrho_q=\sigma^{q}2^{q/2}\Gamma(\frac{q+1}{2})/\sqrt{\pi},$$ and
\be\label{e7}\beta_q=(31/40)^{1/4}\left[1.13+\sqrt{q}
\left(\frac{\Gamma\left(\frac{q+1}{2}\right)}{\sqrt{\pi}}\right)^{-1/q}\right].\ee\end{lemma}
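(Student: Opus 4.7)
The plan is to follow the strategy of Chartrand--Staneva \cite[Lemma 3.2]{CS08} essentially verbatim, exploiting the fact that the coordinates of $Ax$ are jointly i.i.d.\ Gaussian and then establishing a sub-Gaussian tail bound for a sum of i.i.d.\ $q$-th absolute powers.

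First, by homogeneity of both sides, I may rescale and assume $\|x\|_2=1$. Writing $Ax=\sum_j x_j A_{\cdot, j}$ as a linear combination of the independent Gaussian columns of $A$, each coordinate $(Ax)_i$ is then $\mathcal{N}(0,\sigma^2)$ and the coordinates are independent across $i$. Denoting these by $g_1,\dots,g_m$, we have $\|Ax\|_q^q=\sum_{i=1}^m |g_i|^q$. A standard computation of absolute Gaussian moments then gives
$$\mathbb{E}|g_i|^q=\sigma^q\,2^{q/2}\,\Gamma\!\left(\tfrac{q+1}{2}\right)/\sqrt{\pi}=\varrho_q,$$
so that the centering constant in the statement is exactly $m\varrho_q$.

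The core analytic step is to prove a sub-Gaussian Laplace-transform bound for the centered variables $Y_i:=|g_i|^q-\varrho_q$, of the form
$$\mathbb{E}\,e^{\lambda Y_i}\le \exp\!\bigl(\tfrac12\lambda^2 q\,\beta_q^2\,\varrho_q^2\bigr),\qquad \lambda\in\mR,$$
with $\beta_q$ exactly as in \eqref{e7}. Given such an MGF bound, the Chernoff method applied to $e^{\pm\lambda\sum_i Y_i}$, followed by optimizing in $\lambda$ and a union bound over the two tails, immediately yields
$$\mathbb{P}\!\left(\Big|\sum_{i=1}^m Y_i\Big|\ge \eta m\varrho_q\right)\le 2\exp\!\left(-\frac{\eta^2 m}{2q\beta_q^2}\right),$$
which is exactly \eqref{e6}. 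Thus the plan reduces to proving the single-variable MGF bound with the prescribed constant.

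The main obstacle, and the only nontrivial analytic work, is to obtain the explicit constant $\beta_q$ in the sub-Gaussian bound. Concretely, one uses the symmetry of $g_i$ and the exact formula $\mathbb{E}|g_i|^{qk}=\sigma^{qk}2^{qk/2}\Gamma((qk+1)/2)/\sqrt{\pi}$ to control the moments of $Y_i$; the factor $(31/40)^{1/4}$ is expected to arise from a quantitative ratio between $\mathbb{E}|g_i|^{2q}$ and $(\mathbb{E}|g_i|^q)^2$ (i.e.\ a variance-to-mean-squared estimate), while the additive term $\sqrt{q}\,(\Gamma((q+1)/2)/\sqrt{\pi})^{-1/q}$ captures how the tail of $|g_i|^q$ lightens as $q\downarrow 0$ via the Gaussian tail $\mathbb{P}(|g_i|>t)\le 2e^{-t^2/(2\sigma^2)}$ pushed through the map $t\mapsto t^{1/q}$. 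Since these are precisely the ingredients that make \cite[Lemma 3.2]{CS08} work for the unweighted case $\sigma=1/\sqrt{m}$, I would import that estimate directly and rescale by $\sigma$, rather than redoing the moment bookkeeping; this is exactly what the sentence preceding the lemma invokes.
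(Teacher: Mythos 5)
Your proposal matches the paper's own treatment: the paper gives no proof of this lemma beyond the remark that it follows by the same argument as \cite[Lemma 3.2]{CS08}, and your outline (rescale to $\|x\|_2=1$, note that the coordinates of $Ax$ are i.i.d.\ $\mathcal{N}(0,\sigma^2)$ so $\|Ax\|_q^q$ is a sum of i.i.d.\ $q$-th absolute Gaussian powers with mean $\varrho_q$, then run Chernoff off a single-variable sub-Gaussian MGF bound) is precisely that argument. The one piece you leave unproved, the MGF estimate with the explicit constant $\beta_q$, is likewise imported from \cite{CS08} by the authors, so your proposal is faithful to, and no less complete than, the paper's proof.
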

\begin{rem}
Note that
$\left(\Gamma\left(\frac{q+1}{2}\right)/\sqrt{\pi}\right)^{1/q}$ is
an increasing function of $q$, bounded below by
$\mathrm{e}^{-\gamma/2}/2\simeq 0.375$ \cite{CS08}. Therefore
$$\beta_1=(31/40)^{1/4}(1.13+\sqrt{\pi})\simeq 3.8407,$$ and when $q\rightarrow
0,$ $\beta_q\rightarrow 1.13(31/40)^{1/4}\simeq 1.0602.$
\end{rem} Using a same argument as that for \cite[Lemma 3.3]{CS08}, one can
prove the following result Lemma 2.4. In this paper, we provide
alterative simple proof for this result. Such a proof is motivated
by \cite{BDDW08}.
\begin{lemma}\label{lemma2}
Let $0<q\leq 1, \eta,\epsilon>0$, $A$ be an $m\times n$ matrix whose
entries are i.i.d Gaussian random variables
$\mathcal{N}(0,\sigma^2).$ Set
$\delta=\frac{\eta+\epsilon^q}{1-\epsilon^q}$. Then
\be\label{e8}(1-\delta)m\varrho_q\|Dv\|_2^q\leq\|ADv\|_q^q\leq(1+\delta)m\varrho_q\|Dv\|_2^q\ee
holds uniformly for all $k$ sparse vectors $v\in\mR^d$ with
probability exceeding
\be\label{e9}1-2\left(\frac{3\mathrm{e}d}{\epsilon
k}\right)^k\exp{\left(-\frac{\eta^2m}{2q\beta_q^2}\right)}.\ee
\end{lemma}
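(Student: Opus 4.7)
The plan is to follow the $\epsilon$-net approach of Baraniuk--Davenport--DeVore--Wakin, adapted to the $l_q$ quasi-norm via (\ref{TriangleInequ}). The pointwise concentration estimate is already supplied by Lemma \ref{lemma1}; the task is to pass from a single fixed $x=Dv$ to all $k$-sparse $v\in\mR^d$ simultaneously, keeping track of the fact that for $q<1$ it is $\|\cdot\|_q^q$, not $\|\cdot\|_q$, that is subadditive.

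First, I would fix a support $T\subset[d]$ with $|T|=k$ and consider the subspace $V_T:=\{Dv:\mathrm{supp}(v)\subset T\}\subset\mR^n$, whose dimension is at most $k$. The standard volumetric argument produces an $\epsilon$-net $\mathcal{N}_T$ of the Euclidean unit sphere of $V_T$ of cardinality at most $(3/\epsilon)^k$, and each $u\in\mathcal{N}_T$ admits a representation $u=Dv_0$ for some $v_0$ supported on $T$. Applying Lemma \ref{lemma1} to each such $u$ and union-bounding over the $\binom{d}{k}\leq (ed/k)^k$ choices of $T$ and the $(3/\epsilon)^k$ points in each $\mathcal{N}_T$ yields the failure probability $2(3ed/(\epsilon k))^k\exp(-\eta^2 m/(2q\beta_q^2))$ stated in (\ref{e9}). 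On the complementary good event,
\[
(1-\eta)m\varrho_q\ \leq\ \|Au\|_q^q\ \leq\ (1+\eta)m\varrho_q \qquad\text{for every }u\in\bigcup_{|T|=k}\mathcal{N}_T.
\]

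Next, I would extend the bounds from the net to every $k$-sparse $v$. Let
\[
A_q:=\sup\bigl\{\|ADv\|_q^q/(m\varrho_q) : v\text{ is }k\text{-sparse},\ \|Dv\|_2=1\bigr\},
\]
which is finite by compactness of the image sphere. Given such a $v$, pick any $T\supseteq\mathrm{supp}(v)$ with $|T|=k$ and choose $v_0$ supported on $T$ with $Dv_0\in\mathcal{N}_T$ and $\|Dv-Dv_0\|_2\leq\epsilon$. Crucially $v-v_0$ remains supported on $T$ and hence $k$-sparse, so the $q$-triangle inequality (\ref{TriangleInequ}) gives
\[
\|ADv\|_q^q\ \leq\ \|ADv_0\|_q^q+\|AD(v-v_0)\|_q^q\ \leq\ (1+\eta)m\varrho_q+A_q\,m\varrho_q\,\epsilon^q.
\]
Taking the supremum of the left-hand side over $v$ yields $A_q\leq (1+\eta)+\epsilon^q A_q$, i.e.\ $A_q\leq (1+\eta)/(1-\epsilon^q)=1+\delta$, which is the upper bound in (\ref{e8}). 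The matching lower bound follows from the symmetric estimate $\|ADv\|_q^q\geq\|ADv_0\|_q^q-\|AD(v-v_0)\|_q^q\geq (1-\eta)m\varrho_q-\epsilon^q A_q\,m\varrho_q$; substituting the bound on $A_q$ already obtained simplifies the right-hand side to exactly $(1-\delta)m\varrho_q$ via the algebraic identity $1-\eta-2\epsilon^q=(1-\delta)(1-\epsilon^q)$.

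The main technical subtlety is that for $q<1$ the triangle inequality must be applied to $\|\cdot\|_q^q$ rather than to $\|\cdot\|_q$, which is exactly what forces the factor $\epsilon^q$ (and not $\epsilon$) in the denominator of $\delta=(\eta+\epsilon^q)/(1-\epsilon^q)$. It is also what makes the per-support construction of $\mathcal{N}_T$ essential: if one built a single net across all $k$-sparse directions, the residual $v-v_0$ could be $2k$-sparse, so the supremum $A_q$ taken over $k$-sparse vectors would no longer control the error term, and one would be forced either to work with $2k$-sparse vectors on both sides or to invoke an iterative chaining argument.
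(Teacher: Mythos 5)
Your proposal is correct and follows essentially the same route as the paper's proof: a per-support $\epsilon$-net of cardinality $(3/\epsilon)^k$ union-bounded over the $\binom{d}{k}$ supports, followed by a self-bounding (fixed-point) argument on the supremum of $\|ADv\|_q^q/(m\varrho_q)$ using the $q$-power triangle inequality, with the upper bound fed back into the reverse triangle inequality for the lower bound. Your $A_q$ is exactly the paper's $1+B$, and the algebra matches, so there is nothing substantive to add.
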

\begin{proof}
First note that it suffices to prove (\ref{e8}) in the case of
$\|Dv\|_2=1,$ since $A$ is linear. Let $\Sigma_k=\{Dv:\|v\|_0\leq
k\}.$ Fix an index set $T\subset[d]$ with $|T|=k,$ denote by $X_T$
the subspace spanned by the columns of $D_T.$ Note that $X_T$ is at
most $k$ dimensional, and we endow the $l_2$ norms. Choose a finite
$\epsilon$ covering of the unit sphere in $X_T$, i.e., a set of
points $Q_T\subset X_T$, with $\|u\|_2=1$ for all $u\in Q_T$, such
that for all $v\in X_T,\|v\|_2=1$, we have
$$\min_{u\in Q_T}\|v-u\|_2\leq \epsilon.$$ According to \cite[Lemma
2.2]{MPT08}, there exists such an $Q_T$ with $|Q_T|\leq
\left(\frac{3}{ \epsilon}\right)^k$. Repeat this process for each
possible index set $T$, and collect all the sets $Q_T$ together:
$$Q=\bigcup_{T: |T|=k} Q_T.$$ Since the number of possible $T$ is
$\binom{d}{k}$, thus, by Sterling's approximation, $$|Q|\leq
\binom{d}{k}\left(\frac{3}{ \epsilon}\right)^k\leq
\left(\frac{\mathrm{e}d}{k}\right)^k \left(\frac{3}{
\epsilon}\right)^k=\left(\frac{3\mathrm{e}d}{\epsilon k}\right)^k.$$ Applying
Lemma \ref{lemma1}, one gets that
$$\mathbb{P}\left(\max_{u\in Q}\left|\|Au\|_q^q- m\varrho_q\|u\|_2^q\right|\geq\eta
m\varrho_q\|u\|_2^q\right)\leq2\left(\frac{3\mathrm{e}d}{\epsilon
k}\right)^k\exp{\left(-\frac{\eta^2m}{2q\beta_q^2}\right)}.$$ It
thus follows that with probability exceeding (\ref{e9}),
\be\label{e10} m\varrho_q(1-\eta)\|u\|_2^q\leq\|Au\|_q^q\leq
m\varrho_q(1+\eta)\|u\|_2^q\quad \mbox{for all }u\in Q.\ee

Now define $B$ as the smallest number such that $$ \|Av\|_q^q\leq
m\varrho_q(1+B)\|v\|_2^q\quad \mbox{for all }v\in\Sigma_k, \|v\|_2=1.$$
Our goal is to show that $B\leq \delta.$ Note that from the
definitions of $Q_T$ and $Q$, we know that for any $v\in\Sigma_k,
\|v\|_2=1$, we can choose an $u\in Q$ such that $\|u-v\|_2\leq
\epsilon$ and such that $u-v\in\Sigma_k.$ Thus, we get \bea
\|Av\|_q^q\leq \|A(v-u)\|_q^q+\|Au\|_q^q\leq
m\varrho_q[(1+B)\epsilon^q+1+\eta].\eea It thus follows from the
definition of $B$ that
$$m\varrho_q(1+B)\leq m\varrho_q[(1+B)\epsilon^q+1+\eta],$$ which leads to
$B\leq \delta.$ The lower inequality follows from this since by
triangle inequality for $\|\cdot\|_q^q,$ \bea \|Av\|_q^q \geq
\|Au\|_q^q-\|A(u-v)\|_q^q\geq m\varrho_q[(1-\eta)-(1+\delta)\epsilon^q]=
m\varrho_q(1-\delta).\eea The proof is finished.
\end{proof}
\subsection{Proof of Theorem \ref{thm2}}
Now we are ready to prove Theorem \ref{thm2}. The proof is similar to
\cite{CS08}, with a  simple modification.
  By Corollary \ref{Cor1}, we only need to prove that the ($D^{\dag},q$)-RIP condition (\ref{e_RIPC})
  holds with high probability.
  Let $a=ts=\lceil(2^{q/2}b\kappa^q)^{\frac{2}{2-q}}\rceil s$ and
  $k=a+s=(\lceil(2^{q/2}b\kappa^q)^{\frac{2}{2-q}}\rceil+1)s,$ where
  $b>1.$ Note that $$\rho^{1-q/2}\left(\rho^{2/q-1}+1\right)^{q/2}\kappa^q
  <\rho^{1-q/2}2^{q/2}\kappa^q\leq b^{-1}.$$
  Therefore, a sufficient condition for (\ref{e_RIPC}) is
  $\delta_{(t+1)s}<\frac{b-1}{b+1}.$  Choose $\eta=r(b-1)/(b+1)$ for
  $r\in(0,1)$ and $\epsilon^q=(1-r)(b-1)/(2b)$. We have $(\eta+\epsilon^q)/(1-\epsilon^q)\leq(b-1)/(b+1).$
  By Lemma \ref{lemma2}, $A$ will fail to satisfy
  ($D^{\dag},q$)-RIP (\ref{e8}) with probability less than $$2\left(\frac{3\mathrm{e}d}{\epsilon
k}\right)^k\exp{\left(-\frac{\eta^2m}{2q\beta_q^2}\right)}.$$ It is
enough to prove that the above quantity can be bounded by
$(\frac{s}{ed})^s\leq 1/\binom{d}{s}.$ This is equivalent to
 \bea m&\geq&\frac{2q\beta_q^2}{\eta^2}\left[k\ln
\left(\frac{3\mathrm{e}d}{k}\right)+k\ln\left(\frac{1}{\epsilon}\right)+\ln2+s\ln \left(\frac{\mathrm{e}d}{s}\right)\right]\\
&=&\frac{2q\beta_q^2(b+1)^2}{r^2(b-1)^2}\left[\left(\lceil(2^{q/2}b\kappa^q)^{\frac{2}{2-q}}\rceil+1\right)s\left(\ln
\left(\frac{\mathrm{e}d}{s}\right)+\ln
3-\ln\left(\lceil(2^{q/2}b\kappa^q)^{\frac{2}{2-q}}\rceil+1\right)\right)\right.\\&&
+\left.
\frac{1}{q}(\lceil(2^{q/2}b\kappa^q)^{\frac{2}{2-q}}\rceil+1)s\ln\left(\frac{2b}{(1-r)(b-1)}\right)
+\ln2+s\ln \left(\frac{\mathrm{e}d}{s}\right)\right].\eea Similar to
\cite{CS08}, by setting $r=0.849$ and $b=5$, \bea m&\geq&
6.25q\beta_q^2\left[
\left(\lceil(5\cdot2^{q/2}\kappa^q)^{\frac{2}{2-q}}\rceil+1\right)\left(\ln
3-\ln\left(\lceil(5\cdot2^{q/2}\kappa^q)^{\frac{2}{2-q}}\rceil+1\right)\right)s+\ln
2\right.\\
&&\left.+\left(\lceil(5\cdot2^{q/2}\kappa^q)^{\frac{2}{2-q}}\rceil+2\right)s\ln
\left(\frac{\mathrm{e}d}{s}\right)\right]
+17.6\beta_q^2(\lceil(5\cdot2^{q/2}\kappa^q)^{\frac{2}{2-q}}\rceil+1)s
.\eea The proof is finished.
\subsection{Proof of Theorem \ref{thm1}}
In this subsection, we prove Theorem \ref{thm1}. Our goal is to
bound the norm of $h=\hat{f}-f$,  where $\hat{f}$ and $f$ are as in the theorem.

We begin by establishing several lemmas for a general vector $h.$ For arbitrary fixed $h\in \mR^n,$ since $D$ is a frame for $\mR^n$,
one can upper bound $\|h\|_2^2$ by $\mathcal{L}^{-1}\|D^*h\|_2^2.$
To estimate $\|D^*h\|_2,$ we use a common decomposition technique in the standard compressed sensing (e.g., \cite{CRT06b}).
We write $D^*h=(x_1,\cdots,x_d)^T.$
Rearranging the indices if necessary, we may assume that
$|x_{1}|\geq|x_{s+2}|\geq\cdots\geq|x_d|.$ Set
$T=T_{0}=\{1,2,\cdots,s\},$ $T_{1}=\{s+1,s+2,\cdots,s+a\},$ and
$T_{i}=\{s+(i-1)a+1,\cdots,s+ia\},i=2,\cdots,$ with the last subset
of size less than or equal to $a$. Denote $T_{01}=T_{0}\cup T_{1}.$
 Note that
by applying the first inequality of (\ref{BasicInequality}), we have
\be\label{FinalEstimate1} \mathcal{L}^{q/2}\|h\|_2^q \leq\|D^*h\|_2^q = \left(\|D_{T_{01}}^*h\|_2^2+\sum_{j\geq2}\|D_{T_{j}}^*h\|_2^2\right)^{q/2} \leq\|D_{T_{01}}^*h\|_2^q+\sum_{j\geq2}\|D_{T_{j}}^*h\|_2^q. \ee
In what follows, we shall upper bound the last two terms.
The following lemma, which was originally proved in \cite{CS08}, gives an upper bound of the tail $\sum_{j\geq2}\|D_{T_{j}}^*h\|_2^q$ in terms on $\|D_{T^c}^*h\|_q^q$.
We prove it for completeness.
\begin{lemma}[Bounding the tail]\label{TailEstimation} We have
\be\label{e_sum}\sum_{j\geq2}\|D_{T_j}^*h\|_2^q\leq
a^{q/2-1}\|D_{T^c}^*h\|_q^q. \ee
\end{lemma}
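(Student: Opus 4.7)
The plan is to exploit the way the indices are ordered: by construction, every coordinate of $D^{*}h$ indexed by $T_{j}$ (for $j\geq 2$) is no larger in magnitude than every coordinate indexed by $T_{j-1}$, and each block $T_{j-1}$ has exactly $a$ elements. This lets me bound $\|D_{T_j}^*h\|_2$ block by block in terms of $\|D_{T_{j-1}}^*h\|_q$, and then telescope the sum.

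More concretely, I would proceed in the following steps. First, for any $k\in T_{j}$ and any $i\in T_{j-1}$ one has $|x_{k}|\leq|x_{i}|$ by the sorting assumption, so passing to the $q$-th power and averaging over $i\in T_{j-1}$ gives
\begin{equation*}
|x_{k}|^{q}\;\leq\;\frac{1}{|T_{j-1}|}\sum_{i\in T_{j-1}}|x_{i}|^{q}\;=\;\frac{1}{a}\,\|D_{T_{j-1}}^{*}h\|_{q}^{q}.
\end{equation*}
Raising to the $2/q$ power yields $|x_{k}|^{2}\leq a^{-2/q}\|D_{T_{j-1}}^{*}h\|_{q}^{2}$ for every $k\in T_{j}$.

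Next, summing over $k\in T_{j}$ and using $|T_{j}|\leq a$,
\begin{equation*}
\|D_{T_{j}}^{*}h\|_{2}^{2}\;=\;\sum_{k\in T_{j}}|x_{k}|^{2}\;\leq\;a\cdot a^{-2/q}\|D_{T_{j-1}}^{*}h\|_{q}^{2}\;=\;a^{1-2/q}\|D_{T_{j-1}}^{*}h\|_{q}^{2},
\end{equation*}
so that $\|D_{T_{j}}^{*}h\|_{2}^{q}\leq a^{q/2-1}\|D_{T_{j-1}}^{*}h\|_{q}^{q}$. Finally, summing over $j\geq 2$ and observing that the $T_{j}$'s for $j\geq 1$ are disjoint with $\bigcup_{j\geq 1}T_{j}=T^{c}$, I get
\begin{equation*}
\sum_{j\geq 2}\|D_{T_{j}}^{*}h\|_{2}^{q}\;\leq\;a^{q/2-1}\sum_{j\geq 2}\|D_{T_{j-1}}^{*}h\|_{q}^{q}\;=\;a^{q/2-1}\sum_{j\geq 1}\|D_{T_{j}}^{*}h\|_{q}^{q}\;=\;a^{q/2-1}\|D_{T^{c}}^{*}h\|_{q}^{q},
\end{equation*}
which is the desired bound. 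There is no real obstacle here; the only small subtlety is being careful that the sorting is with respect to the entries of $D^{*}h$ (so it is the coefficients $|x_k|$ that decrease across blocks, not the columns of $D$), and that $|T_j|\leq a$ for all $j\geq 1$ so the averaging step uses the full $1/a$ factor in the worst case.
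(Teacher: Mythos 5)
Your proof is correct and follows essentially the same route as the paper: the paper likewise uses the ordering to get $|x_{l'}|^q\leq \|D_{T_{j-1}}^*h\|_q^q/a$ for $l'\in T_j$, deduces $\|D_{T_j}^*h\|_2^q\leq a^{q/2-1}\|D_{T_{j-1}}^*h\|_q^q$, and sums over $j\geq 2$. Your added remarks about $|T_j|\leq a$ and the preceding block being full are correct and only make explicit what the paper leaves implicit.
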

\begin{proof} Fix $i>0$, for each $l\in T_i$ and $l'\in T_{i+1}$, obviously we
have $|x_{l'}|^q\leq |x_{l}|^q.$ Thus, $|x_{l'}|^q\leq
\|D_{T_i}^*h\|_q^q/a.$ It thus follows that
 \be\sum_{j\geq2}\|D_{T_j}^*h\|_2^q\leq
a^{q/2-1}\sum_{j\geq1}\|D_{T_j}^*h\|_q^q=
a^{q/2-1}\|D_{T^c}^*h\|_q^q.\nonumber \ee
\end{proof}
To bound $\|D_{T_{01}}^*h\|_2^q$, we need the following result which utilizes the fact that $A$ satisfies the $(D^{\dag},q)$-RIP.
\begin{lemma}[Consequence of the $(D^{\dag},q)$-RIP]\label{LemDQRIP} Assume that $A$
 satisfies the $(D^{\dag},q)$-RIP of order $s+a.$
Let
\be\label{delta}\Delta=\frac{1+\delta_{a}}{1-\delta_{s+a}}.\ee Then, we have
  \be\label{e1}\|DD_{T_{01}}^*h\|_2^2\leq \kappa\mathcal{U}\Delta^{2/q}a^{1-2/q} \left(\|D_{T^c}^*h\|_q^q + { \mathcal{L}^{q/2}a^{1-q/2}\|Ah\|_q^q \over 1+ \delta_{a}} \right)^{2/q}.\ee
\end{lemma}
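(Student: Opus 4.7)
The plan is to pull everything back through $D^{\dag}$ so that the $(D^{\dag},q)$-RIP can be applied. Two algebraic identities do the heavy lifting: $D^{\dag}D^{*}=I_{n}$ (which is immediate from $D^{\dag}=(DD^{*})^{-1}D$), and its consequence $D=DD^{*}D^{\dag}$. The frame bounds further give $\|DD^{*}x\|_{2}\le \mathcal{U}\|x\|_{2}$ for $x\in\mR^{n}$, and $\|D^{\dag}u\|_{2}\le \mathcal{L}^{-1/2}\|u\|_{2}$ for $u\in\mR^{d}$ (the latter because the largest eigenvalue of $D^{\dag}(D^{\dag})^{*}=(DD^{*})^{-1}$ is $\mathcal{L}^{-1}$).

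With $v:=D_{T_{01}}^{*}h$ (which is $(s+a)$-sparse) and $w:=D_{T^{c}}^{*}h$, I would first use $D=DD^{*}D^{\dag}$ to pass from $D$ to $D^{\dag}$:
\[
\|DD_{T_{01}}^{*}h\|_{2}=\|DD^{*}D^{\dag}v\|_{2}\le \mathcal{U}\,\|D^{\dag}v\|_{2}.
\]
Then the $(D^{\dag},q)$-RIP of order $s+a$ gives the lower bound $\|D^{\dag}v\|_{2}^{q}\le (1-\delta_{s+a})^{-1}\|AD^{\dag}v\|_{q}^{q}$. To turn $\|AD^{\dag}v\|_{q}$ into something we can evaluate, I invoke the other identity: $h=D^{\dag}D^{*}h=D^{\dag}v+D^{\dag}w$, hence $AD^{\dag}v=Ah-AD^{\dag}w$, and the quasi-triangle inequality (\ref{TriangleInequ}) yields $\|AD^{\dag}v\|_{q}^{q}\le \|Ah\|_{q}^{q}+\|AD^{\dag}w\|_{q}^{q}$.

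The remaining term $\|AD^{\dag}w\|_{q}^{q}$ I would decompose along the blocks $T_{j}$, $j\ge 2$ (each $a$-sparse): another application of (\ref{TriangleInequ}), followed by the upper side of the $(D^{\dag},q)$-RIP on each $AD^{\dag}w_{T_{j}}$ and the norm bound $\|D^{\dag}w_{T_{j}}\|_{2}\le \mathcal{L}^{-1/2}\|w_{T_{j}}\|_{2}$, gives
\[
\|AD^{\dag}w\|_{q}^{q}\le (1+\delta_{a})\mathcal{L}^{-q/2}\sum_{j\ge 2}\|w_{T_{j}}\|_{2}^{q}\le (1+\delta_{a})\mathcal{L}^{-q/2}a^{q/2-1}\|D_{T^{c}}^{*}h\|_{q}^{q},
\]
where the last inequality is Lemma \ref{TailEstimation}. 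Assembling these estimates and pulling out the common factor $(1+\delta_{a})\mathcal{L}^{-q/2}a^{q/2-1}/(1-\delta_{s+a})=\Delta\,\mathcal{L}^{-q/2}a^{q/2-1}$ produces
\[
\|D^{\dag}v\|_{2}^{q}\le \Delta\,\mathcal{L}^{-q/2}a^{q/2-1}\left(\|D_{T^{c}}^{*}h\|_{q}^{q}+\frac{\mathcal{L}^{q/2}a^{1-q/2}\|Ah\|_{q}^{q}}{1+\delta_{a}}\right).
\]
Raising to the power $2/q$ and multiplying by $\mathcal{U}^{2}$ yields (\ref{e1}), with constant $\mathcal{U}^{2}\mathcal{L}^{-1}=\kappa\mathcal{U}$.

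The main conceptual obstacle is the opening move: the quantity to be bounded involves $D$, while the hypothesis only controls $D^{\dag}$, and it is the factorization $D=DD^{*}D^{\dag}$ that bridges the two at the cost of one factor of $\mathcal{U}$. After that, the proof is careful bookkeeping: splitting $h=D^{\dag}v+D^{\dag}w$ to expose $Ah$, decomposing $w$ along the blocks so the RIP upper bound applies block-by-block, and invoking Lemma \ref{TailEstimation} to collapse the tail sum into $\|D_{T^{c}}^{*}h\|_{q}^{q}$.
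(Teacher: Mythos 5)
Your proof is correct and follows essentially the same route as the paper's: both rest on $D^{\dag}D^{*}=I$, the quasi-triangle inequality (\ref{TriangleInequ}) to split off the $T_{01}$ block, the two-sided $(D^{\dag},q)$-RIP, the frame eigenvalue bounds $\mathcal{U}^{-1}\|Du\|_2\le\|D^{\dag}u\|_2\le\mathcal{L}^{-1/2}\|u\|_2$, and Lemma \ref{TailEstimation} for the tail --- the paper merely reads the same chain in the opposite direction, starting from a lower bound on $\|Ah\|_q^q$. One notational slip: with $v=D_{T_{01}}^{*}h$ the complementary piece must be $w=D_{T_{01}^{c}}^{*}h=\sum_{j\ge 2}D_{T_j}^{*}h$ rather than $D_{T^{c}}^{*}h$ (the latter double-counts the $T_1$ block in $D^{*}h=v+w$); your subsequent block decomposition over $j\ge 2$ is only consistent with the corrected choice, and the final constants are unaffected since the tail bound is still $a^{q/2-1}\|D_{T^{c}}^{*}h\|_q^q$.
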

\begin{proof}
Note that $D^{\dag}D^*=I$, thus,
\bea\|Ah\|_q^q&=&\|AD^{\dag}D^*h\|_q^q=\left\|AD^{\dag}D_{T_{01}}^*h+\sum_{j\geq2}AD^{\dag}D_{T_j}^*h\right\|_q^q\\
&\geq&\|AD^{\dag}D_{T_{01}}^*h\|_q^q-\sum_{j\geq2}\|AD^{\dag}D_{T_j}^*h\|_q^q,\eea
where the last inequality follows from the triangle inequality (\ref{TriangleInequ}).
 It thus follows from the definitions of $(D^{\dag},q)$-RIP that
  \bea \|Ah\|_q^q \geq
(1-\delta_{s+a})\|D^{\dag}D_{T_{01}}^*h\|_2^q-
(1+\delta_{a})\sum_{j\geq2}\|D^{\dag}D_{T_j}^*h\|_2^q. \eea
Using the definition of frame (\ref{Frames}), which is equivalent to
\be\label{frameEquivalent}
\mathcal{L} \leq \lambda_{\min}(DD^*) \leq \lambda_{\max}(DD^*) \leq \mathcal{U},
\ee
and implies
\bea
\mathcal{U}^{-1} \leq \lambda_{\min}((DD^*)^{-1}) \leq \lambda_{\max}((DD^*)^{-1}) \leq \mathcal{L}^{-1},
\eea
and recalling that $D^{\dag} = (DD^*)^{-1} D,$ we get
\bea
\|Ah\|_q^q
&\geq&\mathcal{U}^{-q}(1-\delta_{s+a})\|DD_{T_{01}}^*h\|_2^q-\mathcal{L}^{-\frac{q}{2}}(1+\delta_{a})\sum_{j\geq2}\|D_{T_j}^*h\|_2^q.\eea
Introducing (\ref{e_sum}) to the above, \bea \|Ah\|_q^q \geq
\mathcal{U}^{-q}(1-\delta_{s+a})\|DD_{T_{01}}^*h\|_2^q-\mathcal{L}^{-\frac{q}{2}}(1+\delta_{a})a^{q/2-1}\|D_{T^c}^*h\|_q^q.\eea
Rearranging terms, noting $\delta$ and $\kappa$ are given by (\ref{delta}) and (\ref{rhokappa}) respectively
, we get that
\bea \|DD_{T_{01}}^*h\|_2^q &\leq&
\Delta\kappa^{q/2}\mathcal{U}^{q/2}a^{q/2-1}\|D_{T^c}^*h\|_q^q + \mathcal{U}^q\|Ah\|_q^q/(1-\delta_{s+a})\\
&=&\Delta\kappa^{q/2}\mathcal{U}^{q/2}a^{q/2-1} \left(\|D_{T^c}^*h\|_q^q + { \mathcal{L}^{q/2}a^{1-q/2}\|Ah\|_q^q \over 1+ \delta_{a}} \right). \eea
Taking the $(2/q)$-th power on both sides, we get the desired result.\end{proof}
With the estimation on $\|DD_{T_{01}}^*h\|_2^{2},$ we are ready to give an upper bound on
$\|D_{T_{01}}^*h\|_2^{q}$.
This can be done by developing a relationship between $\|DD_{T_{01}}^*h\|_2^{2}$ and $\|D_{T_{01}}^*h\|_2^{2}.$
\begin{lemma}[Bounding $\|D_{T_{01}}^*h\|_2^{q}$]\label{LemDT01} Under the assumptions and notations of Lemma \ref{LemDQRIP} , we have
  \be \label{DT01}\|D_{T_{01}}^*h\|_2^{q} \leq 2^{-q/2}\left(1 + \sqrt{ 1 + 4 \kappa^{-2}\Delta^{-2 / q}} \right)^{q/2}\kappa^q\Delta a^{q/2-1} \left(\|D_{T^c}^*h\|_q^q + { \mathcal{L}^{q/2}a^{1-q/2}\|Ah\|_q^q \over 1+ \delta_{a}} \right).\ee
\end{lemma}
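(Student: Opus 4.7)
My plan is to bound $\|D_{T_{01}}^* h\|_2$ by $\|DD_{T_{01}}^* h\|_2$ using Cauchy--Schwarz together with the lower frame bound on $D$, then insert the estimate of Lemma~\ref{LemDQRIP} and majorize the resulting tail contribution by a decreasing-rearrangement argument similar to Lemma~\ref{TailEstimation}. Writing $d_i$ for the columns of $D$, the identity
\[
\|D_{T_{01}}^* h\|_2^2 = \sum_{i\in T_{01}}\langle h,d_i\rangle^2 = \langle h,\,DD_{T_{01}}^* h\rangle
\]
combined with Cauchy--Schwarz yields $\|D_{T_{01}}^* h\|_2^2\le \|h\|_2\,\|DD_{T_{01}}^* h\|_2$. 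The frame lower bound gives $\mathcal{L}\|h\|_2^2\le\|D^* h\|_2^2=\|D_{T_{01}}^* h\|_2^2+\|D_{T_{01}^c}^* h\|_2^2$. Abbreviating $X=\|D_{T_{01}}^* h\|_2$, $Y=\|DD_{T_{01}}^* h\|_2$, $Z=\|D_{T_{01}^c}^* h\|_2$, and letting $R$ denote the parenthesized expression on the right of~\eqref{DT01}, these two steps combine into the key inequality $X^2 \le \mathcal{L}^{-1/2}Y\sqrt{X^2+Z^2}$.

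Taking the $(q/2)$-th root of Lemma~\ref{LemDQRIP} turns~\eqref{e1} into $Y\le(\kappa\mathcal{U})^{1/2}\Delta^{1/q}a^{1/2-1/q}R^{1/q}$. Substituting and simplifying $\sqrt{\kappa\mathcal{U}/\mathcal{L}}=\kappa$ gives $X^2\le E\sqrt{X^2+Z^2}$ with $E := \kappa\Delta^{1/q}a^{1/2-1/q}R^{1/q}$. Squaring produces the quadratic inequality $X^4-E^2X^2-E^2Z^2\le 0$, whose solution yields
\[
X^2 \le \tfrac12 E^2\bigl(1+\sqrt{1+4Z^2/E^2}\bigr).
\]
Taking the $(q/2)$-th power and noting $E^q=\kappa^q\Delta\,a^{q/2-1}R$ reproduces the target bound~\eqref{DT01}, provided the constant inside the square root is replaced by $4\kappa^{-2}\Delta^{-2/q}$.

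The main step is therefore to show $4Z^2/E^2\le 4\kappa^{-2}\Delta^{-2/q}$ uniformly in $h$. For this I will establish $Z^2\le a^{1-2/q}\|D_{T^c}^* h\|_q^2$ by a decreasing-rearrangement argument: since $|x_\ell|$ is sorted in decreasing order, every coordinate $\ell\in T_{01}^c$ satisfies $|x_\ell|^q\le\|D_{T_1}^* h\|_q^q/a\le\|D_{T^c}^* h\|_q^q/a$, and splitting $|x_\ell|^2=|x_\ell|^{2-q}\cdot|x_\ell|^q$ before summing over $T_{01}^c$ produces the claimed estimate. Combined with the evident inequality $R^{2/q}\ge\|D_{T^c}^* h\|_q^2$, this upgrades to $Z^2\le a^{1-2/q}R^{2/q}$; inserting $E^2=\kappa^2\Delta^{2/q}a^{1-2/q}R^{2/q}$ gives $4Z^2/E^2\le 4\kappa^{-2}\Delta^{-2/q}$. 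Monotonicity of $t\mapsto\sqrt{1+t}$ then delivers~\eqref{DT01}. The only nuisance I foresee is careful bookkeeping of the $\mathcal{L},\mathcal{U},\kappa$ and $a$ exponents when moving between the squared form of Lemma~\ref{LemDQRIP} and the $q$-th power form required in the statement.
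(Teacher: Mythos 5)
Your proposal is correct and follows essentially the same route as the paper's proof: Cauchy--Schwarz together with the lower frame bound gives the quadratic inequality $\|D_{T_{01}}^*h\|_2^4\leq \mathcal{L}^{-1}\|DD_{T_{01}}^*h\|_2^2(\|D_{T_{01}}^*h\|_2^2+\|D_{T_{01}^c}^*h\|_2^2)$, the tail term is controlled by the decreasing-rearrangement bound $a^{1-2/q}\|D_{T^c}^*h\|_q^2$, and Lemma \ref{LemDQRIP} supplies the bound on $\|DD_{T_{01}}^*h\|_2$ before the quadratic is solved. The only cosmetic difference is the order of operations (you solve the quadratic first and then majorize $4Z^2/E^2$, while the paper bounds the inhomogeneous term first and completes the square), and your exponent bookkeeping checks out.
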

\begin{proof}
Note that by Cauchy-Schwarz inequality and (\ref{Frames}) \bea
\|D_{T_{01}}^*h\|_2^{4}=|\la h,DD_{T_{01}}^*h\ra
|^2\leq\|h\|_2^2\|DD_{T_{01}}^*h\|_2^{2} \leq \mathcal{L}^{-1}\|D^*h\|_2^2\|DD_{T_{01}}^*h\|_2^{2}. \eea
Substituting with
$\|D^*h\|_2^2=\|D_{T_{01}}^*h\|_2^{2}+\sum_{j\geq2}\|D_{T_j}^*h\|_2^2,$
and then applying the first inequality of (\ref{BasicInequality}) to upper bound the term
$\sum_{j\geq2}\|D_{T_j}^*h\|_2^2,$ we get \be\label{relationship}
\|D_{T_{01}}^*h\|_2^{4}\leq
\mathcal{L}^{-1}\|DD_{T_{01}}^*h\|_2^{2}
\left(\|D_{T_{01}}^*h\|_2^{2}+\left(\sum_{j\geq2}\|D_{T_j}^*h\|_2^q\right)^{2/q}\right) .\ee
Introducing (\ref{e_sum}) to the above, \bea
\|D_{T_{01}}^*h\|_2^{4}\leq
\mathcal{L}^{-1}\|DD_{T_{01}}^*h\|_2^{2}
\left(\|D_{T_{01}}^*h\|_2^{2}+a^{1-2/q}\|D_{T^c}^*h\|_q^2\right).\eea
Rearranging terms and completing the square, this reads as
\bea
\left(\|D_{T_{01}}^*h\|_2^{2} - \frac{\|DD_{T_{01}}^*h\|_2^{2}}{2\mathcal{L}}\right)^2
\leq \|DD_{T_{01}}^*h\|_2^{2}
\left(\frac{\|DD_{T_{01}}^*h\|_2^{2}}{4\mathcal{L}^{2}}+\frac{\|D_{T^c}^*h\|_q^2}{\mathcal{L}a^{2/q-1}}\right).\eea
Taking square root of each side and rearranging terms, we get
$$\|D_{T_{01}}^*h\|_2^{2} \leq \frac{\|DD_{T_{01}}^*h\|_2^{2}}{2\mathcal{L}} +
  \sqrt{\|DD_{T_{01}}^*h\|_2^{2}
\left(\frac{\|DD_{T_{01}}^*h\|_2^{2}}{4\mathcal{L}^{2}}+\frac{\|D_{T^c}^*h\|_q^2}{\mathcal{L}a^{2/q-1}}\right)}.$$
Recalling  $\delta$ and $\kappa$ are given by (\ref{delta}) and (\ref{rhokappa}) respectively, and upper bounding the term $\frac{\|D_{T^c}^*h\|_q^2}{\mathcal{L}a^{2/q-1}}$
by $$ {a^{1-2/q}(\|D_{T^c}^*h\|_q^q)^{2/q} \over \mathcal{L}} \leq \frac{\kappa\mathcal{U}\Delta^{2/q}a^{1-2/q} \left(\|D_{T^c}^*h\|_q^q + { \mathcal{L}^{q/2}a^{1-q/2}\|Ah\|_q^q \over 1+ \delta_{a}} \right)^{2/q}} {\kappa\mathcal{U}\mathcal{L} \Delta^{2/q}},$$
and applying (\ref{e1}), we have
\bea \|D_{T_{01}}^*h\|_2^{2} &\leq& \left({1 \over 2\mathcal{L}} + \sqrt{ {1\over 4\mathcal{L}^2} + {1 \over \kappa\mathcal{UL}\Delta^{2 / q}}} \right)\kappa\mathcal{U}\Delta^{2/q}a^{1-2/q} \left(\|D_{T^c}^*h\|_q^q + { \mathcal{L}^{q/2}a^{1-q/2}\|Ah\|_q^q \over 1+ \delta_{a}} \right)^{2/q}\\
&=&2^{-1}\left(1 + \sqrt{ 1 + 4 \kappa^{-2}\Delta^{-2 / q}} \right)\kappa^2\Delta^{2/q}a^{1-2/q} \left(\|D_{T^c}^*h\|_q^q + { \mathcal{L}^{q/2}a^{1-q/2}\|Ah\|_q^q \over 1+ \delta_{a}} \right)^{2/q}.\eea
Taking the $(q/2)$-th power on both sides, we get our desired result.
\end{proof}
By lemmas \ref{TailEstimation} and \ref{LemDT01}, we know that the last two terms of (\ref{FinalEstimate1}) can be upper bounded in terms of $\|D_{T^c}^*h\|_q^q$ and $\|Ah\|_q^q.$
In what follows, we develop another two lemmas to bound $\|D_{T^c}^*h\|_q^q$ and $\|Ah\|_q^q.$ The following lemma shows that an suitable $(D^{\dag},q)$-RIP condition implies the robust $D$-NSP$_q$ of the measurement matrix $A$.
\begin{lemma}[Robust $D$-NSP$_q$]\label{LemNSP}
Under the assumptions and notations of Lemma \ref{LemDQRIP}, we have for any $h\in \mR^N,$
\be\label{e_DNSP}
\|D_T^*h\|_q^{q} \leq \theta \|D_{T^c}^*h\|_q^q +  { \theta\mathcal{L}^{q/2}a^{1-q/2}\|Ah\|_q^q \over 1+ \delta_{a}}  ,\ee where
\be\label{e_thet}\theta=2^{-q/2}\left(1 + \sqrt{ 1 + 4 \kappa^{-2} \Delta^{-2 / q}} \right)^{q/2}\kappa^q\Delta\rho^{1-q/2},\ee
 where $\rho$ is given by (\ref{rhokappa}).
 In particular, if the condition (\ref{e_RIPC}) is satisfied, then $\theta<1.$
\end{lemma}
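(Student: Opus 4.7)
The plan is to reduce the claimed robust $D$-NSP$_q$ bound on $\|D_T^*h\|_q^q$ to the $\ell_2$-bound on $\|D_{T_{01}}^*h\|_2^q$ already established in Lemma \ref{LemDT01}, and then to verify, by a direct algebraic manipulation, that the resulting constant $\theta$ is strictly less than one whenever the $(D^{\dag},q)$-RIP condition (\ref{e_RIPC}) holds.

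The reduction step is straightforward. Since $T = T_0 \subset T_{01}$, one has $\|D_T^*h\|_2 \leq \|D_{T_{01}}^*h\|_2$, and since $D_T^*h$ lives in $\mR^s$ the basic inequality (\ref{BasicInequality}) gives $\|D_T^*h\|_q \leq s^{1/q-1/2}\|D_T^*h\|_2$. Raising to the $q$-th power and using $s = \rho a$, I obtain
\[
\|D_T^*h\|_q^q \;\leq\; s^{1-q/2}\|D_{T_{01}}^*h\|_2^q \;=\; \rho^{1-q/2}\,a^{1-q/2}\,\|D_{T_{01}}^*h\|_2^q.
\]
Plugging in the estimate (\ref{DT01}) from Lemma \ref{LemDT01}, the factors $a^{1-q/2}$ and $a^{q/2-1}$ cancel, and the coefficient multiplying $\|D_{T^c}^*h\|_q^q + \mathcal{L}^{q/2}a^{1-q/2}\|Ah\|_q^q/(1+\delta_a)$ is exactly the $\theta$ of (\ref{e_thet}).

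For the ``in particular'' statement, set $\alpha = \rho^{2/q-1}$ and $\beta = \kappa^2\Delta^{2/q}$, noting that $q\leq 1$ and $\rho<1$ force $0<\alpha<1$. Raising (\ref{e_RIPC}) to the $(2/q)$-th power and using $\Delta = (1+\delta_a)/(1-\delta_{s+a})$, the hypothesis becomes the clean form $\alpha\beta(\alpha+1) < 1$. Raising the definition of $\theta$ to the same power yields $\theta^{2/q} = x_+\alpha$, where $x_+ = \tfrac12\bigl(\beta+\sqrt{\beta^2+4\beta}\bigr)$ is the positive root of $x^2 - \beta x - \beta = 0$. The goal $\theta<1$ is therefore $\alpha\sqrt{\beta^2+4\beta} < 2 - \alpha\beta$. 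Since $\alpha\beta(\alpha+1)<1$ already forces $\alpha\beta<1$ (hence $2-\alpha\beta>0$), the squaring step is reversible in the needed direction, and after cancellation the squared inequality collapses precisely to $\alpha\beta(\alpha+1)<1$.

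The only step that is not pure bookkeeping is this last equivalence: tying the somewhat ungainly constant $\theta$ back to the clean hypothesis (\ref{e_RIPC}). Once the substitutions $\alpha,\beta,x_+$ are made and $x_+$ is recognized as a root of a specific quadratic, the equivalence drops out of one squaring; this is also the reason the unusual shape of the factor $\rho^{1-q/2}(\rho^{2/q-1}+1)^{q/2}$ in the RIP condition matches the $\theta$ produced by Lemma \ref{LemDT01}.
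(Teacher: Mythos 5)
Your proposal is correct and follows essentially the same route as the paper: both reduce the claim to Lemma \ref{LemDT01} via $\|D_T^*h\|_q^q \leq s^{1-q/2}\|D_{T_{01}}^*h\|_2^q$ and then verify $\theta<1$ by an algebraic equivalence with the $(2/q)$-th power of (\ref{e_RIPC}). Your reformulation with $\alpha=\rho^{2/q-1}$, $\beta=\kappa^2\Delta^{2/q}$ and the quadratic root $x_+$ is just a cleaner packaging of the paper's rationalize-then-square computation, and your justification that the squaring step is reversible (since $2-\alpha\beta>0$ under the hypothesis) is sound.
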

\begin{proof}
Using
$\|D_{T_{01}}^*h\|_2\geq\|D_{T_{0}}^*h\|_2$ and then applying H\"{o}lder's inequality mentioned in (\ref{BasicInequality})
to lower bound $\|D_{T_{0}}^*h\|_2$, we get $\|D_{T_{01}}^*h\|_2^q \geq s^{q/2 - 1}\|D_T^*h\|_q^{q}.$ Therefore, combining with (\ref{DT01}), we get
\bea  s^{q/2 - 1}\|D_T^*h\|_q^{q} \leq 2^{-q/2}\left(1 + \sqrt{ 1 + 4 \kappa^{-2}\Delta^{-2 / q}} \right)^{q/2}\kappa^q\Delta a^{q/2-1} \left(\|D_{T^c}^*h\|_q^q + { \mathcal{L}^{q/2}a^{1-q/2}\|Ah\|_q^q \over 1+ \delta_{a}} \right).\eea
Dividing both sides by
$s^{1-2/q}$, we get (\ref{e_DNSP}). It remains to show that $\theta<1.$ Actually, this can be verified by
showing that $\theta^{2/q}<1$, that is
$$ 2^{-1}\left(1 + \sqrt{ 1 + 4\kappa^{-2}\Delta^{-2 / q}} \right)\kappa^2\Delta^{2/q}\rho^{2/q-1} < 1. $$
Multiplying  both sides by $\sqrt{ 1 + 4\kappa^{-2}\Delta^{-2 / q}} - 1$,
this reads as
$$ 2^{-1}\cdot 4\kappa^{-2}\Delta^{-2 / q}\cdot\kappa^2\Delta^{2/q}\rho^{2/q-1} < \sqrt{ 1 + 4\kappa^{-2}\Delta^{-2 / q}} - 1, $$
which is
equivalent to $$ 2\rho^{2/q-1} + 1 <
\sqrt{1+4\Delta^{-2/q}\kappa^{-2}}. $$ Taking the second power of
both sides, subtracting both sides by $1$ and by a simple
calculation, this reads as
$$ 4\rho^{2/q-1}(\rho^{2/q-1} + 1)< 4 \Delta^{-2/q}\kappa^{-2}.$$
Dividing both sides by $4 \Delta^{-2/q}\kappa^{-2}$, and then taking
the $(q/2)$-th power on both sides, we know that this is equivalent
to $$ \rho^{1-q/2}(\rho^{2/q-1} + 1)^{q/2}\Delta\kappa^{q}< 1.$$
Introducing (\ref{delta}), we know that the
above inequality is equivalent to (\ref{e_RIPC}). Consequently,
(\ref{e_RIPC}) implies $\theta<1.$
\end{proof}
Note that the above lemmas hold for any $h\in \mR^n.$ In what
follows, we shall choose $h=\hat{f}-f$, where $\hat{f}$ is a
solution of $(\ref{LQAnalysis})$ and $f$ is the original signal. Let
$\Omega$ be the index set of the largest $s$ entries of $ D^*f$ in
magnitude. The following results can be verified by using the fact
that $\hat{f}$ is a solution of $(\ref{LQAnalysis})$.

\begin{lemma}[Consequence of a solution]\label{LemSolution} Let $h=\hat{f}-f$, where $\hat{f}$ is a solution of $(\ref{LQAnalysis})$ and
$f$ satisfies $\|Af-y\|_r \leq \varepsilon$. Let $\Omega$ be the
index set of the largest $s$ entries of $ D^*f$ in magnitude. Then
we have \be\label{errorbound} \|Ah\|_q^q\leq  m^{1-q/r}
(2\varepsilon)^q \ee and \be\label{e_cone constraint1}
\|D^*_{T^c}h\|_q^q\leq\|D^*_{T}h\|_q^q+2\|D^*_{\Omega^c}f\|_q^q.\ee
\end{lemma}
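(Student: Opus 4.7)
The plan is to prove the two inequalities separately, with the first coming from the feasibility of both $\hat f$ and $f$ in the $l_r$ data-fit constraint, and the second coming from the optimality of $\hat f$ for the $l_q$-analysis objective, together with the elementary $q$-subadditivity inequality (\ref{TriangleInequ}).

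For (\ref{errorbound}), I would write $\|Ah\|_r = \|A\hat f - Af\|_r \leq \|A\hat f - y\|_r + \|y - Af\|_r \leq 2\varepsilon$, using that $\hat f$ is feasible in (\ref{LQAnalysis}) and the hypothesis $\|Af - y\|_r \leq \varepsilon$. Then I would invoke the second inequality in (\ref{BasicInequality}) with $p_1 = q$, $p_2 = r$ and $N = m$ to obtain $\|Ah\|_q \leq m^{1/q - 1/r}\|Ah\|_r \leq m^{1/q - 1/r}(2\varepsilon)$, and raise this to the $q$-th power. This part is routine.

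For (\ref{e_cone constraint1}), the main idea is to compare $\|D^*\hat f\|_q^q$ and $\|D^*f\|_q^q$ via the optimality inequality $\|D^*\hat f\|_q^q \leq \|D^*f\|_q^q$ (which holds since $f$ is feasible and $\hat f$ is a minimizer, and $t \mapsto t^q$ is monotone for $t \geq 0$). I would split each side over $\Omega$ and $\Omega^c$, apply the reverse form of (\ref{TriangleInequ}) on each piece (i.e., $\|a+b\|_q^q \geq \|a\|_q^q - \|b\|_q^q$, which is the same inequality with the roles of $a+b$ and $-b$ swapped), and cancel the common $\|D_\Omega^*f\|_q^q$ to obtain
\[
\|D_{\Omega^c}^*h\|_q^q \leq \|D_\Omega^*h\|_q^q + 2\|D_{\Omega^c}^*f\|_q^q.
\]

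The remaining step, which I expect to be the one requiring care, is converting the $\Omega$-indexed inequality into the $T$-indexed one of the lemma. By construction $T$ consists of the $s$ coordinates of $D^*h$ with largest magnitude, so $\|D_T^*h\|_q^q \geq \|D_\Omega^*h\|_q^q$ for any other $s$-subset $\Omega$; subtracting from the fixed total $\|D^*h\|_q^q$ gives the complementary $\|D_{T^c}^*h\|_q^q \leq \|D_{\Omega^c}^*h\|_q^q$. Chaining this with the displayed bound yields (\ref{e_cone constraint1}). This identification of $T$ with the optimal top-$s$ support of $D^*h$ (and hence its dominance over any fixed $\Omega$) is the one nontrivial observation in the whole argument; everything else is triangle-type manipulation permitted by (\ref{TriangleInequ}).
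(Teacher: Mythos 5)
Your proposal is correct and follows essentially the same route as the paper: the first inequality via feasibility of both $f$ and $\hat f$, the triangle inequality in $\|\cdot\|_r$, and the norm comparison (\ref{BasicInequality}); the second via optimality of $\hat f$, the reverse form of (\ref{TriangleInequ}) on the $\Omega$ and $\Omega^c$ blocks, and finally the observation that $T$ is the top-$s$ support of $D^*h$, so $\|D^*_{T^c}h\|_q^q-\|D^*_{T}h\|_q^q\leq \|D^*_{\Omega^c}h\|_q^q-\|D^*_{\Omega}h\|_q^q$. The paper invokes that last comparison as a stated fact, while you spell out its justification, but the argument is the same.
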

\begin{proof}
Since both $\hat{f}$ and $f$ are feasible and $r\geq 1$, we have
$$\|Ah\|_r\leq \|Af-y\|_r + \|A\hat{f}-y\|_r \leq  2\varepsilon.$$
Using the H\"{o}lder's inequality mentioned in (\ref{BasicInequality}), we get
$$\|Ah\|_q^q \leq m^{1-q/r} \|Ah\|_r^q \leq m^{1-q/r} (2\varepsilon)^q.$$ This proves (\ref{errorbound}).

Since $\hat{f}$ is a minimizer of $(\ref{LQAnalysis})$, one gets that
$$\|D^*f\|_q^q\geq\|D^*\hat{f}\|_q^q.$$ That is
$$\|D^*_{\Omega}f\|_q^q+\|D^*_{\Omega^c}f\|_q^q\geq\|D^*_{\Omega}\hat{f}\|_q^q+\|D^*_{\Omega^c}\hat{f}\|_q^q.$$
Substituting with $\hat{f}=f+h$ and using the triangle
inequality (\ref{TriangleInequ}),
\bea\|D^*_{\Omega}f\|_q^q+\|D^*_{\Omega^c}f\|_q^q\geq\|D^*_{\Omega}f\|_q^q-\|D^*_{\Omega}h\|_q^q
+\|D^*_{\Omega^c}h\|_q^q-\|D^*_{\Omega^c}f\|_q^q.\eea Rearranging
terms, \bea\|D^*_{\Omega^c}h\|_q^q - \|D^*_{\Omega}h\|_q^q \leq
2\|D^*_{\Omega^c}f\|_q^q. \eea Combining with the fact that \bea
\|D^*_{T^c}h\|_q^q-\|D^*_{T}h\|_q^q\leq
\|D^*_{\Omega^c}h\|_q^q-\|D^*_{\Omega}h\|_q^q, \eea and rearranging
terms, we get (\ref{e_cone constraint1}). The proof is finished.
\end{proof}

We may now conclude the proof of Theorem \ref{thm1}. We first apply lemmas \ref{LemNSP} and \ref{LemSolution}
to get an upper bound on $\|D^*_{T^c}h\|_q^q$.
Introducing (\ref{e_DNSP}) to (\ref{e_cone constraint1}), we get
$$
\|D^*_{T^c}h\|_q^q\leq\|D^*_{T}h\|_q^q+2\|D^*_{\Omega^c}f\|_q^q \leq \theta \|D_{T^c}^*h\|_q^q +  { \theta\mathcal{L}^{q/2}a^{1-q/2}\|Ah\|_q^q \over 1+ \delta_{a}} + 2\|D^*_{\Omega^c}f\|_q^q. $$
Rearranging terms and dividing both sides by $1-\theta$ (noting that $\theta<1$ by Lemma \ref{LemNSP}), we get
\be\label{e5}\|D^*_{T^c}h\|_q^q\leq { \theta\mathcal{L}^{q/2}a^{1-q/2}\|Ah\|_q^q \over (1-\theta)(1+ \delta_{a})} + \frac{2}{1-\theta}\|D^*_{\Omega^c}f\|_q^q.\ee
Now we can upper bound $\|h\|_2.$
Introducing (\ref{e_sum}) and (\ref{DT01}) to (\ref{FinalEstimate1}), and noting that $\theta$ and $\rho$ are given by (\ref{e_thet}) and (\ref{rhokappa}) respectively, we get
\bea \mathcal{L}^{q/2}\|h\|_2^q &\leq& \theta
s^{q/2-1} \left(\|D_{T^c}^*h\|_q^q + {
\mathcal{L}^{q/2}a^{1-q/2}\|Ah\|_q^q \over 1+ \delta_{a}} \right)
+ a^{q/2-1}\|D_{T^c}^*h\|_q^q \\
&=& \frac{(\theta+\rho^{1-q/2}) \|D_{T^c}^*h\|_q^q}{s^{1-q/2}} +
{\theta \mathcal{L}^{q/2} \rho^{q/2-1} \|Ah\|_q^q \over 1+ \delta_{a}}. \eea
Applying (\ref{e5}) to the above, we get
\bea \mathcal{L}^{q/2} \|h\|_2^q &\leq&
\frac{(\theta+\rho^{1-q/2})}{s^{1-q/2}}\left( { \theta\mathcal{L}^{q/2}a^{1-q/2}\|Ah\|_q^q \over (1-\theta)(1+ \delta_{a})} + \frac{2}{1-\theta}\|D^*_{\Omega^c}f\|_q^q\right)  +
{\theta \mathcal{L}^{q/2} \rho^{q/2-1} \|Ah\|_q^q \over 1+ \delta_{a}}\\
&=&\frac{2(\theta+\rho^{1-q/2})\|D^*_{\Omega^c}f\|_q^q}{(1-\theta)s^{1-q/2}}
+ {\theta \mathcal{L}^{q/2} (1+\rho^{q/2-1})\|Ah\|_q^q \over
(1-\theta)(1+ \delta_{a})} . \eea Using (\ref{errorbound}) to the
above, and dividing both sides by $\mathcal{L}^{q/2}$, \bea
\|h\|_2^q \leq
\frac{2(\theta+\rho^{1-q/2})\|D^*_{\Omega^c}f\|_q^q}{\mathcal{L}^{q/2}(1-\theta)s^{1-q/2}}
+ {\theta (1+\rho^{q/2-1}) m^{1-q/r} (2\varepsilon)^q \over
(1-\theta)(1+ \delta_{a})} .\eea Taking the $(1/q)$-th power on both
sides and then using a basic inequality $(b+c)^{1/q}\leq
2^{1/q-1}(b^{1/q}+c^{1/q}),\forall b,c\geq 0$ we get \bea  \|h\|_2
\leq
\frac{(2\theta+2\rho^{1-q/2})^{1/q}\|D^*_{\Omega^c}f\|_q}{\sqrt{\mathcal{L}}(1-\theta)^{1/q}s^{1/q-1/2}}
+ { (2\theta+2\theta\rho^{q/2-1})^{1/q} m^{1/q-1/r} \varepsilon
\over (1-\theta)^{1/q}(1+ \delta_{a})^{1/q}} .\eea Thus, we get
(\ref{Estimationh}). It remains to prove (\ref{EstimationDh}). By
(\ref{e_DNSP}), $$\|D^*h\|_q^q = \|D_T^*h\|_q^q
+¡¡\|D_{T^c}^*h\|_q^q \leq (1+\theta)\|D_{T^c}^*h\|_q^q +  {
\theta\mathcal{L}^{q/2}a^{1-q/2}\|Ah\|_q^q \over 1+ \delta_{a}}. $$
Introducing (\ref{e5}) and then using (\ref{errorbound}), we get
\bea \|D^*h\|_q^q &\leq& (1+\theta)\left({
\theta\mathcal{L}^{q/2}a^{1-q/2}\|Ah\|_q^q \over (1-\theta)(1+
\delta_{a})} + \frac{2}{1-\theta}\|D^*_{\Omega^c}f\|_q^q\right)  + {
\theta\mathcal{L}^{q/2}a^{1-q/2}\|Ah\|_q^q \over 1+ \delta_{a}}
\\ &\leq & \frac{2(1+\theta)}{1-\theta}\|D^*_{\Omega^c}f\|_q^q + {2^{q+1}\theta\mathcal{L}^{q/2}a^{1-q/2}m^{1-q/r}\varepsilon^q\over (1-\theta)(1+ \delta_{a})}.\eea
Thus, we get the desired result (\ref{EstimationDh}).The proof is finished.
\begin{rem}
   In the proof, we derive an upper bound for $\|D^*\hat{f}-D^*f\|_2:$ $$\|D^*\hat{f}-D^*f\|_2\leq
  C_5\frac{\|D^*f-(D^*f)_{[s]}\|_q}{s^{1/q-1/2}} +¡¡C_6m^{1/q-1/r} \varepsilon .$$
\end{rem}
\section{Compressed data separation via $l_q$ split analysis}
In this section, we prove Theorem \ref{thm3}.
The proof is similar to that of Theorem \ref{thm2}, and makes use
some ideas from \cite{LLS13TIT}. We first establish an $(D,q)$-RIP recovery
result for compressed data separation with $\iota$ components ($\iota \geq 2$). Considering $\iota =2$, we then utilize Lemma \ref{lemma2} to show that
such an $(D,q)$-RIP condition holds with high probability.
As a result, one can finish the proof.

Let $\iota$ be a positive integer greater than $2$ and  $D_1\in \mR^{n \times d_1}, D_2\in \mR^{n \times d_2},\cdots, D_{\iota} \in \mR^{n \times d_{\iota}}$
be $\iota$ tight frames with frames bounds $1$ for  $\mR^{n}$. Set $s=s_1+s_2+ \cdots + s_{\iota}$ and $\bar{d}=d_1+d_2 + \cdots + d_{\iota}.$
Let
\be\label{e-notation}\bar{D}=[D_1|D_2|\cdots|D_{\iota}], ~\Psi=
 \begin{pmatrix}
 D_1&     &       & & \\
    & D_2 &       & & \\
    &     & \ddots&  & \\
    &     &       & D_{\iota}
 \end{pmatrix},
\mbox{ and } f=\left(\begin{array}{c}f_1\\f_2 \\ \vdots \\ f_{\iota}\end{array}\right).\ee
Note that from the definition of tight frames,
$$\sum_{j=1}^{\iota} f_j =\sum_{j=1}^{\iota} D_jD_j^*f_j =  \bar{D} \Psi^*f.$$
Then, with $\iota=2$, (\ref{SAA})
can be rewritten  as \be\label{SAAI}\hat{f}=
\argmin\limits_{\tilde{f}\in{\mR^{\iota n}}}\|\Psi^*\tilde{f}\|_q^q\quad\mbox{s.t.}\quad
A\bar{D}\Psi^*\tilde{f}=y.\ee
We define the mutual coherence between $D_1,D_2\cdots,D_{\iota}$ as follows.
\begin{definition} Let $D_1=(d_{1i})_{1\leq i\leq d_1}, D_2=(d_{2j})_{1\leq j\leq d_2},\cdots,D_{\iota} = (d_{\iota j})_{ 1\leq j\leq d_{\iota}}$.
The mutual coherence between $D_1,D_2\cdots,D_{\iota}$ is defined as
$$\mu_1=\mu_1(D_1;D_2;\cdots;D_{\iota})=\max_{k\neq l}\max_{i,j}|\la d_{ki},d_{lj}\ra|.$$\end{definition}
\subsection{$(\bar{D},q)$-RIP recovery result for $l_q$ split analysis}
In this subsection, we give $(\bar{D},q)$-RIP guarantee results on compressed data separation from noisy measurements $y=A(f_1+f_2+ \cdots + f_{\iota}) + z$ via solving
the following $l_q$-analysis optimization \be\label{LQSAnalysis}
\argmin_{\tilde{f}\in\mR^{\iota n}} \|\Psi^*\tilde{f}\|_q\quad \mbox{subject
to}\quad \|A\bar{D}\Psi^*\tilde{f}-y\|_r \leq \varepsilon, \ee
where $0<q\leq 1 \leq r\leq \infty, \varepsilon\geq 0$, $\|z\|_r\leq \varepsilon$
and $\bar{D},\Psi,f$ are as in (\ref{e-notation}).
\begin{thm}\label{thm4}
Let $\iota\in \mathbb{Z}^+$, $0<q\leq 1 \leq r\leq \infty, \varepsilon\geq 0$.
Suppose we observe data from the model $y=A(f_1+f_2+\cdots+f_{\iota}) + z$ with $\|z\|_r\leq \varepsilon.$
Let $D_1\in\mR^{n\times d_1}$,$D_2\in\mR^{n\times d_2}$,$\cdots, D_{\iota} \in \mR^{n \times d_{\iota}},$ be $\iota$ arbitrary tight frames for $\mR^n$ with frame bounds $1$. Let
$\bar{D},\Psi,f$ be as in (\ref{e-notation}). Fix a positive integer
$a>s$. Assume that the mutual coherence $\mu_1$ between $D_1
, D_2,\cdots,D_{\iota}$ satisfies \be\label{e_MIPC}\mu_1(s+a)(\rho^{2/q-1}+1)<1,\ee
and that the $(\bar{D},q)$-RIP constant of $A$ satisfies
\begin{equation}\label{e-cd}
\Delta\rho^{1-q/2}(\rho^{2/q-1}+1)^{q/2}<(2\iota)^{-q/2} , \end{equation}
where \be\label{rhodelta}
\rho=\frac{s}{a}\quad \mbox{ and }\quad
\Delta=\frac{1+\delta_a}{1-\delta_{s+a}}.\ee
 Then any solution $\hat{f}$ to the $l_q$
Split analysis (\ref{LQSAnalysis}) obeys
\be\label{R2}\|\hat{f}-f\|_2\leq
\tilde{C}_1\frac{\|\Psi^*f-(\Psi^*f)_{[s]}\|_q}{s^{1/q-1/2}} +
\tilde{C}_2m^{1/q-1/r} \varepsilon,\ee and \be \label{EstimationDh2}
\|\Psi^*\hat{f} - \Psi^*f\|_q^q \leq
\tilde{C}_3\|\Psi^*f-(\Psi^*f)_{[s]}\|_q^q +
\tilde{C}_4a^{1-q/2}m^{1-q/r}\varepsilon^q,\ee
   where $\tilde{C}_1,\tilde{C}_2,\tilde{C}_3$ and $\tilde{C}_4$ are
  positive constants (given explicitly in the proof) depending only on the
  $(\bar{D},q)$-RIP constant $\delta_s$, $\delta_{s+a}$, $\rho,\iota$ and $q$.
\end{thm}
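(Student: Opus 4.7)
The plan is to follow the template of Theorem~\ref{thm1}'s proof, adapted to the multi-dictionary setting via the mutual coherence $\mu_1$. Set $h=\hat{f}-f\in\mR^{\iota n}$ and $x=\Psi^*h\in\mR^{\bar{d}}$. Since each $D_j$ is Parseval with bound $1$, the map $\Psi^*$ is a linear isometry, so $\|h\|_2=\|x\|_2$. Let $T=T_0$ index the $s$ largest entries of $x$ in magnitude, $T_j$ the subsequent blocks of size $a$, $T_{01}=T_0\cup T_1$, and $\Omega$ the support of $(\Psi^*f)_{[s]}$. Replaying the argument of Lemma~\ref{LemSolution} for the optimization (\ref{LQSAnalysis}) gives the cone constraint $\|x_{T^c}\|_q^q\le\|x_T\|_q^q+2\|\Psi^*_{\Omega^c}f\|_q^q$ and the feasibility bound $\|A\bar{D}x\|_q^q\le m^{1-q/r}(2\varepsilon)^q$, while Lemma~\ref{TailEstimation} applied to $x$ delivers $\sum_{j\ge 2}\|x_{T_j}\|_2^q\le a^{q/2-1}\|x_{T^c}\|_q^q$.

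I next reproduce Lemma~\ref{LemDQRIP} with $\bar{D}$ in place of $D$: decompose $\bar{D}x=\bar{D}x_{T_{01}}+\sum_{j\ge 2}\bar{D}x_{T_j}$, apply the $(\bar{D},q)$-RIP to each summand, and use the operator-norm estimate $\|\bar{D}v\|_2\le\sqrt{\iota}\,\|v\|_2$ (which holds because $\bar{D}\bar{D}^*=\iota I_n$) to dominate the tail. Rearranging yields an estimate of the form
\[
\|\bar{D}x_{T_{01}}\|_2^q\;\lesssim\;\Delta\,\iota^{q/2}a^{q/2-1}\|x_{T^c}\|_q^q\;+\;\frac{\|A\bar{D}x\|_q^q}{1-\delta_{s+a}}.
\]

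The crux is the analog of Lemma~\ref{LemDT01}, which must pass from $\|\bar{D}x_{T_{01}}\|_2$ back to $\|x_{T_{01}}\|_2$; this is where the mutual coherence enters. Since $\Psi\Psi^*=I_{\iota n}$, Cauchy--Schwarz gives $\|x_{T_{01}}\|_2^2=\langle h,\Psi x_{T_{01}}\rangle\le\|h\|_2\|\Psi x_{T_{01}}\|_2$, taking the role of the frame-bound step in Lemma~\ref{LemDT01}. Writing any $v\in\mR^{\bar{d}}$ as the concatenation of blocks $v^{(k)}$ indexed by the $D_k$-columns, one expands
$\|\bar{D}v\|_2^2=\|\Psi v\|_2^2+2\sum_{k<l}\langle D_kv^{(k)},D_lv^{(l)}\rangle$
and bounds the cross terms by $\mu_1\|v\|_1^2\le\mu_1|\mathrm{supp}(v)|\,\|v\|_2^2$, yielding the sparse two-sided equivalence $\bigl|\|\bar{D}v\|_2^2-\|\Psi v\|_2^2\bigr|\le\mu_1|\mathrm{supp}(v)|\,\|v\|_2^2$. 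Applied to $v=x_{T_{01}}$, together with $\|h\|_2^2=\|x_{T_{01}}\|_2^2+\|x_{T_{01}^c}\|_2^2$ and the tail bound $\|x_{T_{01}^c}\|_2^q\le a^{q/2-1}\|x_{T^c}\|_q^q$, this produces a quadratic inequality in $\|x_{T_{01}}\|_2^2$. Hypothesis~(\ref{e_MIPC}) forces $\mu_1(s+a)<1$, so the quadratic is solvable and gives $\|x_{T_{01}}\|_2^q$ bounded by a linear combination of $\|\bar{D}x_{T_{01}}\|_2^q$ and $a^{q/2-1}\|x_{T^c}\|_q^q$.

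Chaining these estimates with H\"older's inequality $s^{q/2-1}\|x_T\|_q^q\le\|x_T\|_2^q\le\|x_{T_{01}}\|_2^q$ produces a robust $D$-NSP$_q$-type bound $\|x_T\|_q^q\le\theta\|x_{T^c}\|_q^q+(\text{const})\cdot a^{1-q/2}\|A\bar{D}x\|_q^q$; condition~(\ref{e-cd}), whose factor $(2\iota)^{-q/2}$ absorbs both the $\iota^{q/2}$ from the operator-norm tail control and the doubling introduced when inverting the quadratic, is exactly what forces $\theta<1$. Plugging this NSP-type inequality back into the cone constraint and invoking the feasibility bound controls $\|x_{T^c}\|_q^q$ in terms of $\|\Psi^*f-(\Psi^*f)_{[s]}\|_q$ and $\varepsilon$; the decompositions $\|h\|_2^q=\|x\|_2^q\le\|x_{T_{01}}\|_2^q+\|x_{T_{01}^c}\|_2^q$ and $\|\Psi^*\hat{f}-\Psi^*f\|_q^q=\|x_T\|_q^q+\|x_{T^c}\|_q^q$ then yield (\ref{R2}) and (\ref{EstimationDh2}), exactly as in the closing of Theorem~\ref{thm1}. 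The main obstacle is the quadratic inversion in the coherence step: one must track how $\mu_1(s+a)$ enters the denominators so that the compound constant stays strictly below $1$ precisely under (\ref{e_MIPC}) and (\ref{e-cd}).
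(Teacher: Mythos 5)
Your proposal retraces the paper's own proof essentially step for step: the same block decomposition of $\Psi^*h$, the same tail estimate, the same consequence of the $(\bar{D},q)$-RIP using $\|\bar{D}\|=\sqrt{\iota}$, the same Cauchy--Schwarz identity $\|\Psi_{T_{01}}^*h\|_2^4\le\|h\|_2^2\sum_{k}\|D_kD_{kT_{01}^k}^*h_k\|_2^2$, the same coherence expansion of $\|\bar{D}\Psi_{T_{01}}^*h\|_2^2$ followed by a quadratic inversion, and the same chaining of the resulting robust null space property with the cone constraint and the feasibility bound.

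The one genuine gap sits exactly where you flag ``the main obstacle'': your coherence estimate is too lossy to verify $\tilde{\theta}<1$ under the stated hypotheses. You bound the cross terms by $\mu_1\|v\|_1^2\le\mu_1\,|\mathrm{supp}(v)|\,\|v\|_2^2$, which for $v$ supported on $T_{01}$ yields the constant $\mu_1(s+a)$. The paper instead estimates each product $\|D_{kT_{01}^k}^*h_k\|_1\|D_{lT_{01}^l}^*h_l\|_1$ blockwise, via $\sqrt{|T_{01}^k|\,|T_{01}^l|}\le\tfrac{1}{2}(|T_{01}^k|+|T_{01}^l|)\le\tfrac{1}{2}(s+a)$, and obtains $U=\mu_1(s+a)/2$, i.e.\ half of your constant. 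This factor of two is not cosmetic: the verification of $\tilde{\theta}<1$ reduces to $\iota\Delta^{2/q}(\rho^{2/q-1}+1)\rho^{2/q-1}+U(1+\rho^{2/q-1})<1$, namely (\ref{DRipMipCondition}), and the hypotheses are calibrated so that (\ref{e-cd}) makes the first summand less than $1/2$ while (\ref{e_MIPC}) makes $U(1+\rho^{2/q-1})=\tfrac{1}{2}\mu_1(s+a)(1+\rho^{2/q-1})$ less than $1/2$. With $\mu_1(s+a)$ in place of $U$, hypothesis (\ref{e_MIPC}) only bounds the second summand by $1$, the sum is only bounded by $3/2$, and $\tilde{\theta}<1$ cannot be concluded; so the argument as written does not close under the stated mutual coherence condition. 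The fix is precisely the blockwise Cauchy--Schwarz/AM--GM step above. (A side remark: your cruder bound is at least valid for every $\iota$, whereas the paper's sharper constant relies on the inequality $\sum_{k\neq l}a_ka_l\le\sum_k a_k^2$, which holds only for $\iota=2$ --- the case actually invoked in Theorem \ref{thm3}.)
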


The proof of this theorem will be given in Subsection 3.3. The following result is a direct
consequence of the above theorem. We will use it to prove Theorem \ref{thm3}.
\begin{corollary}\label{Cor2}
  Under the assumptions of Theorem \ref{thm4}, we further assume that $z=0$.
  Then any solution $\hat{f}$ to (\ref{SAAI}) satisfies
  $$ \|\hat{f}-f\|_2\leq
  \tilde{C}_1\frac{\|\Psi^*f-(\Psi^*f)_{[s]}\|_q}{s^{1/q-1/2}} ,$$
  where $\tilde{C}_1$ is given by Theorem \ref{thm4}.
\end{corollary}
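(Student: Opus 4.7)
The plan is to deduce Corollary \ref{Cor2} as a direct specialization of Theorem \ref{thm4} to the noise-free setting, so essentially no new estimates are needed. First I would observe that the assumption $z=0$ allows us to invoke Theorem \ref{thm4} with parameter $\varepsilon = 0$ and any choice of $r \in [1,\infty]$: the required noise bound $\|z\|_r \le \varepsilon$ holds trivially, since $\|0\|_r = 0$. All the remaining hypotheses of Theorem \ref{thm4}, namely the tight frame assumption on $D_1,\ldots,D_\iota$, the mutual coherence condition (\ref{e_MIPC}), and the $(\bar D,q)$-RIP condition (\ref{e-cd}) with $\rho$ and $\Delta$ given by (\ref{rhodelta}), are exactly the hypotheses carried over in Corollary \ref{Cor2}.

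Next I would verify that solving (\ref{SAAI}) is the same as solving the noise-free instance of (\ref{LQSAnalysis}). When $\varepsilon = 0$, the feasibility constraint $\|A\bar D \Psi^*\tilde f - y\|_r \le 0$ collapses to the equality constraint $A\bar D\Psi^*\tilde f = y$ used in (\ref{SAAI}). Moreover, since $t \mapsto t^q$ is strictly increasing on $[0,\infty)$, minimizing $\|\Psi^*\tilde f\|_q^q$ and minimizing $\|\Psi^*\tilde f\|_q$ over a common feasible set produce exactly the same set of minimizers. Consequently, any $\hat f$ that solves (\ref{SAAI}) is also a solution of (\ref{LQSAnalysis}) with $\varepsilon = 0$, so the conclusion (\ref{R2}) of Theorem \ref{thm4} applies to this $\hat f$.

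Finally, substituting $\varepsilon = 0$ into (\ref{R2}) makes the residual term $\tilde C_2 m^{1/q - 1/r}\varepsilon$ vanish identically, leaving
\[
\|\hat f - f\|_2 \;\le\; \tilde C_1\,\frac{\|\Psi^* f - (\Psi^* f)_{[s]}\|_q}{s^{1/q - 1/2}},
\]
with the same constant $\tilde C_1$ supplied by Theorem \ref{thm4}. This is precisely the bound claimed in the corollary. The argument involves no genuine obstacle; the only point worth stating carefully is the reduction of the inequality-constrained program (\ref{LQSAnalysis}) to the equality-constrained program (\ref{SAAI}) when $\varepsilon = 0$, which is a definitional check rather than a real difficulty.
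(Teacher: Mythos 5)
Your proposal is correct and matches the paper's treatment: the paper offers no separate argument for Corollary \ref{Cor2}, simply declaring it a direct consequence of Theorem \ref{thm4}, which is exactly the specialization to $\varepsilon=0$ that you carry out. Your two definitional checks (the collapse of the inequality constraint to the equality constraint, and the equivalence of minimizing $\|\Psi^*\tilde f\|_q$ versus $\|\Psi^*\tilde f\|_q^q$) are the only points that need saying, and you say them.
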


\subsection{Proof of Theorem \ref{thm3}}
Now, we are ready to prove Theorem \ref{thm3}.
The rest of the proof is similar to the argument used in the proof
of Theorem \ref{thm2}. We include the sketch only. We first prove
that (\ref{e-cd}) holds with high probability. Setting $\iota=2$ in Theorem \ref{thm4}, it suffices to prove
$$\Delta\rho^{1-q/2}<2^{-3q/2}.$$ Let
$k=a+s=(t+1)s=(\lceil(2^{3q/2}b)^{\frac{2}{2-q}}\rceil+1)s,$ where
  $b>1.$ We only need to prove that
  $\delta_{(t+1)s}<\frac{b-1}{b+1}.$ A similar argument as that in
  the proof of Theorem \ref{thm2}, one can easily prove that
  $\delta_{(t+1)s}<\frac{b-1}{b+1}$ is met with probability
  exceeding $1/\binom{d}{s}$ provided that \bea m&\geq&
6.25q\beta_q^2\left[
\left(\lceil(5\cdot2^{3q/2})^{\frac{2}{2-q}}\rceil+1\right)\left(\ln
3-\ln\left(\lceil(5\cdot2^{3q/2})^{\frac{2}{2-q}}\rceil+1\right)\right)s+\ln
2\right.\\
&&\left.+\left(\lceil(5\cdot2^{3q/2})^{\frac{2}{2-q}}\rceil+2\right)s\ln
\left(\frac{\mathrm{e}d}{s}\right)\right]
+17.6\beta_q^2(\lceil(5\cdot2^{3q/2})^{\frac{2}{2-q}}\rceil+1)s
.\eea Note that in the proof we set $b=5$. In this case,
(\ref{e_MIPC}) is implied by
$$\mu_1s\left(\lceil(2^{3q/2}5)^{\frac{2}{2-q}}\rceil+1\right)\left(\frac{1}{8\cdot 5^{2/q}}+1\right)<1.$$
Now one can finish the proof by applying Corollary \ref{Cor2} and
\bea\|\Psi^*f-(\Psi^*f)_{[s]}\|_q^q&\leq&
\|D_1^*f_1-(D_1^*f_1)_{[s_1]}\|_q^q+\|D_2^*f_2-(D_2^*f_2)_{[s_2]}\|_q^q.\eea

\subsection{Proof of Theorem \ref{thm4}}
In this subsection, we prove Theorem \ref{thm4}.
 Our goal is to bound the norm
of $h$, where $h=f-\hat{f}$, $\hat{f}$ is a
solution of (\ref{SAAI}) and $f$ is the original signal. As in the proof of Theorem \ref{thm1},
 We do so
by bounding the norm of $\Psi^*h$, since $\Psi$ is a tight frame for $\mR^{\iota n}$. Actually,
since $D_1,D_2,\cdots,D_{\iota}$ are tight frames with frame bounds $1$, one has that
\be\label{e-psi}\|\Psi^*\tilde{f}\|_2^2=\sum_{k=1}^{\iota}\|D_k^*\tilde{f}_k\|_2^2=
\sum_{k=1}^{\iota}\|\tilde{f}_k\|_2^2=\|\tilde{f}\|_2^2\quad\mbox{
for all } \tilde{f}\in\mR^{\iota n},\ee
and that
\be\label{e-d}\|\bar{D}\|=\sqrt{\lambda_{\max}(\bar{D}\bar{D}^*)}=\sqrt{\lambda_{\max}(\iota I)}=\sqrt{\iota}.\ee

For arbitrary fixed $h\in \mR^{\iota n},$ we write
$\Psi^*h=(x_1,x_2,...,x_{\bar{d}})^T$. Making rearrangements if necessary,
we assume that $|x_{1}|\geq|x_{2}|\geq\cdots|x_{\bar{d}}|.$ Set
$T=T_{0}=\{1,2,\cdots,s\},$ $T_{1}=\{s+1,s+2,\cdots,s+a\},$ and
$T_{i}=\{s+(i-1)a+1,\cdots,s+ia\},i=2,\cdots,$ with the last subset
of size less than or equal to $a$. Denote $T_{01} = T_0\cup T_1.$
Note that by applying the first inequality of
(\ref{BasicInequality}), we have \be\label{SFinalEstimate1}
\|h\|_2^2 = \|\Psi^*h\|_2^2
\leq\|\Psi_{T_{01}}^*h\|_2^2+\left(\sum_{j\geq2}\|\Psi_{T_{j}}^*h\|_2^q\right)^{2/q}.
\ee We only need to upper bound the last two terms. By Lemma
\ref{TailEstimation}, we also have
\be\label{e_sum2}\sum_{j\geq2}\|\Psi_{T_j}^*h\|_2^q=
a^{q/2-1}\|\Psi_{T^c}^*h\|_q^q. \ee Applying (\ref{e_sum2}) to
(\ref{SFinalEstimate1}), we can upper bound $\|h\|_2^2$ by
\be\label{SFinalEstimate2} \|h\|_2^2 \leq
\|\Psi_{T_{01}}^*h\|_2^2+a^{1-2/q}\|\Psi_{T^c}^*h\|_q^2.\ee To bound
$\|\Psi_{T_{01}}^*h\|_2^q,$ we need the following lemma, which gives
an upper bound on
$\sum_{k=1}^{\iota}\|D_kD_{kT_{01}^k}^*h_1\|_2^2$.
 It can be proved by using of the definitions of the $(\bar{D},q)$-RIP and the mutual coherence.
 Denote that $T_{01}^1 = T_{01} \cap [d_1]$ and $T_{01}^k = \{j-\sum_{l=1}^{k-1}d_l : j \in T_{01} \cap [\sum_{l=1}^{k-1}d_l+1,\sum_{l=1}^{k}d_l]\} $ for any integer $k\in [2,\iota].$
\begin{lemma}[Consequence of the $(\bar{D},q)$-RIP and the mutual coherence]\label{lemDRIPMU}
 Assume that $A$
 satisfies the $(\bar{D},q)$-RIP of order $s+a.$
Let $\Delta$ be as in \ref{rhodelta} and
\be\label{clustercoherence} U=\frac{\mu_1(s+a)}{2},\ee
 where $u_1$ is the mutual coherence between $D_1$, $D_2,\cdots,D_{\iota}$.
Then we have
\be\label{eDDT12}
\sum_{k=1}^{\iota}\|D_kD_{kT_{01}^k}^*h_k\|_2^2 \leq \iota \Delta^{2/q} a^{1-2/q} \left( \|\Psi_{T^c}^*h\|_q^q + { a^{1 - q/2} \|A\bar{D}\Psi^*h\|_q^q \over \iota^{q/2}(1+\delta_{a})} \right)^{2/q} + U\|\Psi_{T_{01}}^*h\|_2^2.
\ee
\end{lemma}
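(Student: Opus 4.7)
The lemma is the multi-frame analog of Lemma \ref{LemDQRIP}, with the new feature that one must control cross-frame interactions between the $D_k$'s; this is exactly where the mutual coherence $\mu_1$ enters. My plan has two conceptual stages: first bound $\|\bar{D}\Psi^*_{T_{01}}h\|_2^2$ via the $(\bar{D},q)$-RIP, mimicking the proof of Lemma \ref{LemDQRIP}; then relate this squared norm of a sum to the sum of squared norms $\sum_k\|D_kD^*_{kT_{01}^k}h_k\|_2^2$ by polarization, controlling the off-diagonal inner products that appear via the mutual coherence.

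For the first stage, I would decompose $\bar{D}\Psi^*h = \bar{D}\Psi^*_{T_{01}}h + \sum_{j\geq 2}\bar{D}\Psi^*_{T_j}h$ and apply the triangle inequality (\ref{TriangleInequ}). Since $\Psi^*_{T_{01}}h$ is $(s+a)$-sparse and each $\Psi^*_{T_j}h$ (for $j\geq 2$) is $a$-sparse, the $(\bar{D},q)$-RIP gives
$$\|A\bar{D}\Psi^*h\|_q^q \geq (1-\delta_{s+a})\|\bar{D}\Psi^*_{T_{01}}h\|_2^q - (1+\delta_a)\sum_{j\geq 2}\|\bar{D}\Psi^*_{T_j}h\|_2^q.$$
The operator-norm estimate $\|\bar{D}\| = \sqrt{\iota}$ from (\ref{e-d}) gives $\|\bar{D}\Psi^*_{T_j}h\|_2 \leq \sqrt{\iota}\|\Psi^*_{T_j}h\|_2$, and then Lemma \ref{TailEstimation} applied to $\Psi^*h\in\mR^{\bar d}$ yields $\sum_{j\geq 2}\|\bar{D}\Psi^*_{T_j}h\|_2^q \leq \iota^{q/2}a^{q/2-1}\|\Psi^*_{T^c}h\|_q^q$. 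Rearranging, factoring out $\Delta\iota^{q/2}a^{q/2-1}$, and taking the $(2/q)$-th power produces exactly the first term on the right-hand side of (\ref{eDDT12}).

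For the second stage, set $v_k := D^*_{kT_{01}^k}h_k \in \mR^{d_k}$, so that $\bar{D}\Psi^*_{T_{01}}h = \sum_k D_k v_k$ and $\sum_k\|v_k\|_2^2 = \|\Psi^*_{T_{01}}h\|_2^2$. Polarization gives
$$\sum_{k=1}^{\iota}\|D_k v_k\|_2^2 = \|\bar{D}\Psi^*_{T_{01}}h\|_2^2 - 2\sum_{k<l}\langle D_k v_k, D_l v_l\rangle,$$
so the task reduces to bounding the cross-term sum. For $k \neq l$, every entry of $D_k^*D_l$ is bounded in absolute value by $\mu_1$, hence $|\langle D_k v_k, D_l v_l\rangle| = |\langle v_k, D_k^*D_l v_l\rangle| \leq \mu_1\|v_k\|_1\|v_l\|_1$. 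Then I use $\sum_k\|v_k\|_1 = \|\Psi^*_{T_{01}}h\|_1$ together with the Cauchy--Schwarz bound $\|\Psi^*_{T_{01}}h\|_1 \leq \sqrt{s+a}\|\Psi^*_{T_{01}}h\|_2$ (valid since $\Psi^*_{T_{01}}h$ is $(s+a)$-sparse), yielding $2\sum_{k<l}\|v_k\|_1\|v_l\|_1 \leq (\sum_k\|v_k\|_1)^2 \leq (s+a)\|\Psi^*_{T_{01}}h\|_2^2$. Multiplying by $\mu_1$ and regrouping in terms of $U = \mu_1(s+a)/2$, then inserting the stage-one estimate for $\|\bar{D}\Psi^*_{T_{01}}h\|_2^2$ into the polarization identity, produces (\ref{eDDT12}).

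The principal obstacle is the second-stage bookkeeping: one must track the cross-term combinatorics carefully enough that the leading coefficient of $\|\Psi^*_{T_{01}}h\|_2^2$ matches the cluster-coherence quantity $U$ defined in (\ref{clustercoherence}), since a naive application of Cauchy--Schwarz to the ordered pairs $(k,l)$ inflates the constant. Everything else is a mechanical adaptation of the argument behind Lemma \ref{LemDQRIP}, relying on the tight-frame identity $D_kD_k^* = I_n$ (which underlies $\bar{D}\Psi^*h = \sum_k h_k$) in place of the single-frame relation $D^{\dag}D^* = I_d$ used there.
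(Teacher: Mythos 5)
Your two-stage architecture is exactly the paper's: stage one reproduces the derivation of the bound (\ref{e4}) on $\|\bar{D}\Psi_{T_{01}}^*h\|_2^2$ via the $(\bar{D},q)$-RIP, the operator-norm identity (\ref{e-d}) and Lemma \ref{TailEstimation}, and stage two is the polarization identity (\ref{e_MIP}) with the off-diagonal inner products controlled by the mutual coherence. Stage one is correct as written and matches the paper step for step.

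The gap is in stage two, and it is precisely the one you flag without closing. Your estimate $2\sum_{k<l}\|v_k\|_1\|v_l\|_1\leq\left(\sum_k\|v_k\|_1\right)^2\leq(s+a)\|\Psi_{T_{01}}^*h\|_2^2$ yields a cross-term bound of $\mu_1(s+a)\|\Psi_{T_{01}}^*h\|_2^2=2U\|\Psi_{T_{01}}^*h\|_2^2$, i.e.\ twice the coefficient claimed in (\ref{eDDT12}); the lossy step is the global Cauchy--Schwarz on the concatenated vector of support size $s+a$. The missing idea is to apply Cauchy--Schwarz pairwise, tracking the individual support sizes: for $k\neq l$ one has $\|v_k\|_1\|v_l\|_1\leq\sqrt{|T_{01}^k|\,|T_{01}^l|}\,\|v_k\|_2\|v_l\|_2\leq\frac{|T_{01}^k|+|T_{01}^l|}{2}\|v_k\|_2\|v_l\|_2\leq\frac{s+a}{2}\|v_k\|_2\|v_l\|_2$, where the last inequality uses that the sets $T_{01}^k$ partition $T_{01}$, so any two of them together contain at most $s+a$ indices. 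One then bounds $\sum_{k}\sum_{l\neq k}\|v_k\|_2\|v_l\|_2$ by $\sum_k\|v_k\|_2^2=\|\Psi_{T_{01}}^*h\|_2^2$ via AM--GM on each pair (exact for $\iota=2$, the case actually invoked in Theorem \ref{thm3}), which produces the coefficient $U=\mu_1(s+a)/2$ of (\ref{clustercoherence}). Since the mutual coherence hypotheses (\ref{e_MIPC}) and the condition of Theorem \ref{thm3} are calibrated against this $U$, losing the factor of $2$ weakens the final coherence condition by the same factor; the pairwise estimate is therefore a necessary ingredient of the proof rather than optional bookkeeping.
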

\begin{proof}
We first note that from the definition of
$\mu_1$, \bea &&\left|\sum_{k=1}^{\iota} \sum_{l\neq k}\la
D_kD_{kT_{01}^k}^*h_k,D_lD_{lT_{01}^l}^*h_l\ra\right| =2\left|\sum_{k=1}^{\iota} \sum_{l\neq k} \sum_{i\in T_{01}^k }\sum_{j\in T_{01}^l } \la
d_{ki},d_{lj}\ra \la
d_{ki},h_k\ra\la h_l,d_{lj}\ra\right|\\
&\leq&\mu_1\sum_{k=1}^{\iota} \sum_{l\neq k} \sum_{i\in T_{01}^k }\sum_{j\in T_{01}^l }\left| \la d_{ki},h_k\ra\la h_l,d_{lj}\ra\right|
=\mu_1\sum_{k=1}^{\iota} \sum_{l\neq k} \|D_{kT_{01}^k}^*h_k\|_1\|D_{lT_{01}^l}^*h_l\|_1.\eea
Using Cauchy-Scwarz inequality, for $k,l\in [\iota],$ $k\neq l$, we have
\bea
&&\|D_{kT_{01}^k}^*h_k\|_1\|D_{lT_{01}^l}^*h_l\|_1 \leq \sqrt{|T_{01}^k| \cdot |T_{01}^l|}\|D_{kT_{01}^k}^*h_k\|_2 \|D_{lT_{01}^l}^*h_l\|_2 \\
&\leq& {|T_{01}^k| + |T_{01}^l| \over 2} \|D_{kT_{01}^k}^*h_k\|_2 \|D_{lT_{01}^l}^*h_l\|_2 \leq {s+a \over 2} \|D_{kT_{01}^k}^*h_k\|_2 \|D_{lT_{01}^l}^*h_l\|_2.
\eea
Therefore, recalling that $U$ is given by (\ref{clustercoherence}), we get
\bea \left|\sum_{k=1}^{\iota} \sum_{l\neq k}\la
D_kD_{kT_{01}^k}^*h_k,D_lD_{lT_{01}^l}^*h_l\ra\right|
&\leq&  U\sum_{k=1}^{\iota} \sum_{l\neq k} \|D_{kT_{01}^k}^*h_k\|_2\|D_{lT_{01}^l}^*h_l\|_2 \\
&\leq& U\sum_{k=1}^{\iota} \|D_{kT_{01}^k}^*h_k\|_2^2 = U\|\Psi_{T_{01}}^*h\|_2^2.
\eea
Thus, we get
\begin{eqnarray}\|\bar{D}\Psi_{T_{01}}^*h\|_2^2&=&\left\|\sum_{k=1}^{\iota} D_kD_{kT_{01}^k}^*h_k\right\|_2^2\nonumber\\
&=&\sum_{k=1}^{\iota}\|D_kD_{kT_{01}^k}^*h_k\|_2^2 + \sum_{k=1}^{\iota} \sum_{l\neq k}\la
D_kD_{kT_{01}^k}^*h_k,D_lD_{lT_{01}^l}^*h_l\ra\nonumber\\
\label{e_MIP}&\geq&\sum_{k=1}^{\iota}\|D_kD_{kT_{01}^k}^*h_k\|_2^2
-U\|\Psi_{T_{01}}^*h\|_2^2.\end{eqnarray}
We next upper bound $\|\bar{D}\Psi_{T_{01}}^*h\|_2^2$. We do so by using properties of $(\bar{D},q)$-RIP.
Note that \bea \|A\bar{D}\Psi^*h\|_q^q = \left\|A\bar{D}\Psi_{T_{01}}^*h+\sum_{j\geq2}A\bar{D}\Psi_{T_j}^*h\right\|_q^q
\geq \|A\bar{D}\Psi_{T_{01}}^*h\|_q^q - \sum_{j\geq2}\|A\bar{D}\Psi_{T_j}^*h\|_q^q.\eea
According to the definition of $(\bar{D},q)$-RIP, and then applying
(\ref{e-d}),
\bea\|A\bar{D}\Psi^*h\|_q^q &\geq&
(1-\delta_{s+a})\|\bar{D}\Psi_{T_{01}}^*h\|_2^q-
(1+\delta_{a})\sum_{j\geq2}\|\bar{D}\Psi_{T_j}^*h\|_2^q\\
&\geq&(1-\delta_{s+a})\|\bar{D}\Psi_{T_{01}}^*h\|_2^q - \iota^{q/2}(1+\delta_{a})\sum_{j\geq2}\|\Psi_{T_j}^*h\|_2^q.\eea
Introducing (\ref{e_sum2}) to the above and then dividing both sides by $1-\delta_{s+a}$, with (\ref{rhodelta}),  we get
\bea\|\bar{D}\Psi_{T_{01}}^*h\|_2^q \leq \iota^{q/2}\Delta
a^{q/2-1}\|\Psi_{T^c}^*h\|_q^q + \|A\bar{D}\Psi^*h\|_q^q/(1-\delta_{s+a}),\eea which is equivalent to
\be\label{e4}\|\bar{D}\Psi_{T_{01}}^*h\|_2^2\leq
\iota \Delta^{2/q} a^{1-2/q} \left( \|\Psi_{T^c}^*h\|_q^q + { a^{1 - q/2} \|A\bar{D}\Psi^*h\|_q^q \over \iota^{q/2}(1+\delta_{a})} \right)^{2/q}.\ee
Introducing (\ref{e_MIP}) to (\ref{e4}), one can get the desired result.
\end{proof}
Now, we give an upper bound on $\|\Psi_{T_{01}}^*h\|_2^q.$ By developing a relationship between
$\|\Psi_{T_{01}}^*h\|_2^2$ and $\sum_{k=1}^{\iota}\|D_kD_{kT_{01}^k}^*h_k\|_2^2,$
and then applying the above lemma, one can prove the following result.
\begin{lemma}
Under the assumptions of Lemma \ref{lemDRIPMU}, assume that $U<1.$ Then, we have
\be\label{PsiT01}
\|\Psi_{T_{01}}^*h\|_2^q \leq  \left({ U + \iota\Delta^{2/q} + \sqrt{(U - \iota \Delta^{2/q} )^2 + 4\iota \Delta^{2/q} } \over 2(1-U)}\right)^{q/2}{1\over a ^{1-q/2}} \left( \|\Psi_{T^c}^*h\|_q^q + {a^{1 - q/2} \|A\bar{D}\Psi^*h\|_q^q \over \iota^{q/2}(1+\delta_{a})} \right).
\ee
\end{lemma}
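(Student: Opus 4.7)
The plan is to mimic the proof of Lemma \ref{LemDT01} (bounding $\|D_{T_{01}}^*h\|_2^{q}$), replacing the single-frame quantity $\|DD_{T_{01}}^*h\|_2^2$ with the blockwise sum $\sum_{k=1}^{\iota}\|D_kD_{kT_{01}^k}^*h_k\|_2^2$, and then using Lemma \ref{lemDRIPMU} (which already encodes both the $(\bar D,q)$-RIP and the mutual-coherence correction $U\|\Psi_{T_{01}}^*h\|_2^2$). The tight-frame normalization $\|h\|_2=\|\Psi^*h\|_2$ plays the role that $\mathcal L^{-1}\|D^*h\|_2^2$ played before.

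First I would write
\[
\|\Psi_{T_{01}}^*h\|_2^2=\sum_{k=1}^{\iota}\|D_{kT_{01}^k}^*h_k\|_2^2=\sum_{k=1}^{\iota}\langle h_k, D_kD_{kT_{01}^k}^*h_k\rangle,
\]
and apply Cauchy--Schwarz twice (once inside each inner product, once across the sum over $k$) to obtain
\[
\|\Psi_{T_{01}}^*h\|_2^4\le \Bigl(\sum_{k=1}^{\iota}\|h_k\|_2^2\Bigr)\Bigl(\sum_{k=1}^{\iota}\|D_kD_{kT_{01}^k}^*h_k\|_2^2\Bigr)=\|h\|_2^2\sum_{k=1}^{\iota}\|D_kD_{kT_{01}^k}^*h_k\|_2^2.
\]
Next, using (\ref{e-psi}) to write $\|h\|_2^2=\|\Psi_{T_{01}}^*h\|_2^2+\sum_{j\ge 2}\|\Psi_{T_j}^*h\|_2^2$, then applying the first inequality of (\ref{BasicInequality}) and (\ref{e_sum2}) to dominate the tail by $a^{1-2/q}\|\Psi_{T^c}^*h\|_q^2$, I get
\[
\|h\|_2^2\le \|\Psi_{T_{01}}^*h\|_2^2+a^{1-2/q}\|\Psi_{T^c}^*h\|_q^2.
\]

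Now set
\[
X=\|\Psi_{T_{01}}^*h\|_2^2,\qquad R=\|\Psi_{T^c}^*h\|_q^q+\frac{a^{1-q/2}\|A\bar D\Psi^*h\|_q^q}{\iota^{q/2}(1+\delta_a)},\qquad Y=a^{1-2/q}R^{2/q}.
\]
Since $R\ge \|\Psi_{T^c}^*h\|_q^q$ implies $a^{1-2/q}\|\Psi_{T^c}^*h\|_q^2\le Y$, and since Lemma~\ref{lemDRIPMU} gives $\sum_{k}\|D_kD_{kT_{01}^k}^*h_k\|_2^2\le \iota\Delta^{2/q}Y+UX$, combining the three displays above yields
\[
X^2\le (X+Y)\bigl(\iota\Delta^{2/q}Y+UX\bigr)=UX^2+(\iota\Delta^{2/q}+U)XY+\iota\Delta^{2/q}Y^2,
\]
i.e.\ the quadratic inequality
\[
(1-U)X^2-(\iota\Delta^{2/q}+U)XY-\iota\Delta^{2/q}Y^2\le 0.
\]

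Since $U<1$, solving this quadratic in $X$ and keeping only the positive root gives
\[
X\le \frac{(\iota\Delta^{2/q}+U)+\sqrt{(\iota\Delta^{2/q}+U)^2+4(1-U)\iota\Delta^{2/q}}}{2(1-U)}\,Y.
\]
The slightly fussy step, which I expect is the main obstacle, is simplifying the discriminant: a direct expansion shows
\[
(\iota\Delta^{2/q}+U)^2+4(1-U)\iota\Delta^{2/q}=(\iota\Delta^{2/q}-U)^2+4\iota\Delta^{2/q},
\]
which matches the target expression. Finally, taking the $(q/2)$-th power of both sides and using $Y^{q/2}=a^{q/2-1}R$ produces exactly (\ref{PsiT01}).
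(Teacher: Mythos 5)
Your proposal is correct and follows essentially the same route as the paper: the double Cauchy--Schwarz step, the bound $\|h\|_2^2\le\|\Psi_{T_{01}}^*h\|_2^2+a^{1-2/q}\|\Psi_{T^c}^*h\|_q^2$, the substitution of Lemma \ref{lemDRIPMU}, and the resulting quadratic inequality in $\|\Psi_{T_{01}}^*h\|_2^2$ are all exactly what the paper does. The only (harmless) difference is that you dominate $a^{1-2/q}\|\Psi_{T^c}^*h\|_q^2$ by $Y$ before solving the quadratic, whereas the paper keeps it and makes that replacement after extracting the root; both yield the same constant.
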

\begin{proof}
  Note that by applying Cauchy-Schwarz inequality twice,
  \bea
&& \|\Psi_{T_{01}}^*h\|_2^4 = \left(\sum_{k=1}^{\iota} \|D_{kT_{01}^k}^*h_k\|_2^2\right)^2 = \left(\sum_{k=1}^{\iota} \la D_kD_{kT_{01}^k}^*h_k,h_k\ra\right)^2\\
&\leq& \left(\sum_{k=1}^{\iota}\|h_k\|_2 \| D_kD_{kT_{01}^k}^*h_k\|_2\right)^2 \leq \left(\sum_{k=1}^{\iota}\|h_k\|_2^2\right)\left(\sum_{k=1}^{\iota}\| D_kD_{kT_{01}^k}^*h_k\|_2^2\right)\\
& \leq & \|h\|_2^2 \left(\sum_{k=1}^{\iota}\| D_kD_{kT_{01}^k}^*h_k\|_2^2\right).
\eea
Applying (\ref{SFinalEstimate2}), we get
\bea
\|\Psi_{T_{01}}^*h\|_2^4 \leq \left(\|\Psi_{T_{01}}^*h\|_2^2+a^{1-2/q}\|\Psi_{T^c}^*h\|_q^2\right)\left(\sum_{k=1}^{\iota}\| D_kD_{kT_{01}^k}^*h_k\|_2^2\right).
\eea
Introducing (\ref{eDDT12}),
\bea
\|\Psi_{T_{01}}^*h\|_2^4 \leq \left(\|\Psi_{T_{01}}^*h\|_2^2+a^{1-2/q}\|\Psi_{T^c}^*h\|_q^2\right)
\left(\iota\Delta^{2/q}a ^{1-2/q} \mathcal{X} + U\|\Psi_{T_{01}}^*h\|_2^2\right),
\eea
where for notational simplicity, we set
\be\label{mathcalX}\mathcal{X} = \left(\|\Psi_{T^c}^*h\|_q^q + {a^{1 - q/2} \|AD\Psi^*h\|_q^q \over \iota^{q/2}(1+\delta_{a})} \right)^{2/q}.\ee
Rearranging terms, this can be rewritten as
$$ (1-U) \|\Psi_{T_{01}}^*h\|_2^4 - (U\|\Psi_{T^c}^*h\|_q^2 + \iota \Delta^{2/q} \mathcal{X})a^{1-2/q}\|\Psi_{T_{01}}^*h\|_2^2 - \iota a^{2-4/q}\|\Psi_{T^c}^*h\|_q^2 \Delta^{2/q} \mathcal{X} \leq 0  .$$
Noting that $U<1$ and by solving a quadratic inequalities of type of $c_1x^2 -c_2 x - c_3 \leq 0 $ with variable $x\in[0,\infty)$ and positive constants $c_1,c_2,c_3$,
we get $$\|\Psi_{T_{01}}^*h\|_2^2 \leq { (U\|\Psi_{T^c}^*h\|_q^2 + \iota \Delta^{2/q} \mathcal{X}) + \sqrt{(U \|\Psi_{T^c}^*h\|_q^2 + \iota \Delta^{2/q} \mathcal{X})^2 + 4\iota (1-U)\|\Psi_{T^c}^*h\|_q^2 \Delta^{2/q} \mathcal{X}} \over 2(1-U)a^{2/q-1}}.$$
Upper bounding the term $\|\Psi_{T^c}^*h\|_q^2$ by $\mathcal{X},$
\bea
 \|\Psi_{T_{01}}^*h\|_2^2 &\leq&  { (U + \iota \Delta^{2/q}) \mathcal{X} + \sqrt{(U + \iota \Delta^{2/q} )^2\mathcal{X}^2 + 4\iota (1-U)\Delta^{2/q} \mathcal{X}^2} \over 2(1-U)a^{2/q-1}}\\
 &=& { U + \iota \Delta^{2/q} + \sqrt{(U - \iota \Delta^{2/q} )^2 + 4\iota \Delta^{2/q} } \over 2(1-U)a^{2/q-1}}\mathcal{X},
 \eea
which leads to our desired result by introducing (\ref{mathcalX}) and taking the $(q/2)$-th power on both sides.
\end{proof}
In what follows, we shall bound $\|\Psi_{T^c}^*h\|_q^q$ and $\|A\bar{D}\Psi^*h\|_q^q.$
We first need the following result, which shows that the $(\bar{D},q)$-RIP implies
that the matrix $A\bar{D}\Psi^*$ satisfies robust $\Psi$-NSP$_q$.
\begin{lemma}[Robust $\Psi$-NSP$_q$]\label{LemNSP2}
Under the assumptions of Lemma \ref{lemDRIPMU}, we have
\be\label{e_DNSP2}
\|\Psi_{T}^*h\|_q^q \leq \tilde{\theta}\left(\|\Psi_{T^c}^*h\|_q^q + {a^{1 - q/2} \|A\bar{D}\Psi^*h\|_q^q \over \iota^{q/2}(1+\delta_{a})}\right),\ee
 where
\be\label{tildetheta}\tilde{\theta}=\left({ U + \iota \Delta^{2/q} + \sqrt{(U - \iota \Delta^{2/q} )^2 + 4\iota \Delta^{2/q} } \over 2(1-U)}\right)^{q/2} \rho ^{1 - q/2}.\ee
In particular, if
\be\label{DRipMipCondition} \iota \Delta^{2/q} ( \rho ^{ 2/q -1} + 1)\rho^{2/q -1} + U(1 +  \rho ^{2/q - 1}) < 1, \ee
 then $\tilde{\theta}<1.$
\end{lemma}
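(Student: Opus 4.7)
My plan is to prove the two parts of Lemma \ref{LemNSP2} in sequence: first derive the robust $\Psi$-NSP$_q$ bound (\ref{e_DNSP2}), then show that condition (\ref{DRipMipCondition}) forces $\tilde{\theta}<1$ via a quadratic analysis of the expression defining $\tilde\theta$.

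For (\ref{e_DNSP2}), I would combine the preceding lemma's bound on $\|\Psi_{T_{01}}^*h\|_2^q$ (inequality (\ref{PsiT01})) with the two-step comparison
$$\|\Psi_T^*h\|_q^q \leq s^{1-q/2}\|\Psi_T^*h\|_2^q \leq s^{1-q/2}\|\Psi_{T_{01}}^*h\|_2^q.$$
The first step applies (\ref{BasicInequality}) to the $s$-sparse vector $\Psi_T^*h$ with $p_1=q$, $p_2=2$, and $N=s$; the second uses $T\subset T_{01}$. Substituting (\ref{PsiT01}) and observing $s^{1-q/2}/a^{1-q/2}=\rho^{1-q/2}$ directly produces (\ref{e_DNSP2}) with $\tilde{\theta}$ exactly as in (\ref{tildetheta}).

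For the second assertion, the key observation is that $\tilde{\theta}^{2/q}=Z\rho^{2/q-1}$, where
$$Z=\frac{(U+\iota\Delta^{2/q})+\sqrt{(U-\iota\Delta^{2/q})^2+4\iota\Delta^{2/q}}}{2(1-U)}.$$
Using the identity $(U-\iota\Delta^{2/q})^2+4\iota\Delta^{2/q}=(U+\iota\Delta^{2/q})^2+4\iota\Delta^{2/q}(1-U)$, one recognises $Z$ as the larger root of the quadratic
$$P(Z)\;:=\;(1-U)Z^2-(U+\iota\Delta^{2/q})Z-\iota\Delta^{2/q}.$$
Note that $U<1$ is forced by (\ref{DRipMipCondition}), since otherwise its left-hand side is at least $U(1+\rho^{2/q-1})\geq 1+\rho^{2/q-1}>1$. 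Hence the leading coefficient of $P$ is positive, while $P(0)=-\iota\Delta^{2/q}<0$; so $\tilde{\theta}<1$, i.e.\ $Z<\rho^{1-2/q}$, is equivalent to $P(\rho^{1-2/q})>0$. Multiplying that inequality through by $\rho^{2(2/q-1)}$ and grouping terms, it becomes
$$(1+\rho^{2/q-1})\bigl(U+\iota\Delta^{2/q}\rho^{2/q-1}\bigr)<1,$$
which is precisely (\ref{DRipMipCondition}).

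The main obstacle is recognising that the complicated expression defining $\tilde{\theta}$ is nothing but the positive root of a simple quadratic $P(Z)$; once that is seen, the equivalence with (\ref{DRipMipCondition}) follows from routine algebra, the only care being the bookkeeping of the exponents $1-q/2$, $2/q-1$, and $1-2/q$ when converting $P(\rho^{1-2/q})>0$ into its final form.
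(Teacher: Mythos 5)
Your proposal is correct, and the first half coincides with the paper's proof: the same chain $\|\Psi_T^*h\|_q^q \leq s^{1-q/2}\|\Psi_T^*h\|_2^q \leq s^{1-q/2}\|\Psi_{T_{01}}^*h\|_2^q$ followed by substitution of (\ref{PsiT01}) and the identification $s^{1-q/2}a^{q/2-1}=\rho^{1-q/2}$. For the second half the destination is the same but your route is tidier. The paper isolates the square root, observes that the right-hand side $2\rho^{1-2/q}(1-U)-(U+\iota\Delta^{2/q})$ is positive under (\ref{DRipMipCondition}), squares, and then tracks a chain of ``sufficient conditions'' through two further rearrangements. You instead recognize that $Z=\tilde\theta^{2/q}\rho^{1-2/q}$ is exactly the positive root of the quadratic $P(Z)=(1-U)Z^2-(U+\iota\Delta^{2/q})Z-\iota\Delta^{2/q}$ (your discriminant identity checks out), so that $\tilde\theta<1$ is \emph{equivalent} to $P(\rho^{1-2/q})>0$, which after multiplying by $\rho^{4/q-2}$ is literally (\ref{DRipMipCondition}). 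This buys you two things the paper's version does not make explicit: the positivity issue before squaring is absorbed into the sign pattern $P(0)<0$ with positive leading coefficient, and you obtain an if-and-only-if rather than mere sufficiency. One bookkeeping remark: the lemma as stated inherits only ``the assumptions of Lemma \ref{lemDRIPMU},'' yet both your argument and the paper's need $U<1$ to invoke (\ref{PsiT01}) for the first inequality (\ref{e_DNSP2}); your derivation of $U<1$ from (\ref{DRipMipCondition}) covers the second assertion but not the first, so you should either assume $U<1$ outright (as the preceding lemma does) or note that the first part is only asserted under that hypothesis. This is a defect shared with the paper's own statement, not a gap in your reasoning.
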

\begin{proof}
  Note that by applying (\ref{BasicInequality}), one has \bea
\|\Psi_{T_{01}}^*h\|_2^q \geq\|\Psi_{T}^*h\|_2^q\geq
\|\Psi_{T}^*h\|_q^q/s^{1-q/2}.\eea
 Combining with (\ref{PsiT01}), we get
 \bea
{\|\Psi_{T}^*h\|_q^q \over s^{1-q/2}}  \leq
  \left({ U + \iota \Delta^{2/q} + \sqrt{(U - \iota \Delta^{2/q} )^2 + 4\iota \Delta^{2/q} } \over 2(1-U)}\right)^{q/2}{1\over a ^{1-q/2}} \left( \|\Psi_{T^c}^*h\|_q^q + {a^{1 - q/2} \|A\bar{D}\Psi^*h\|_q^q \over \iota^{q/2}(1+\delta_{a})} \right),
\eea
which is equivalent to (\ref{e_DNSP2}).

It remains to prove $\tilde{\theta}<1$. This can be verified by showing that $\tilde{\theta}^{2/q}<1,$
which is guaranteed (since $U<1$ by our assumptions) provided that
$$ \sqrt{(U - \iota\Delta^{2/q} )^2 + 4\iota\Delta^{2/q} } < 2 \rho ^{1 - 2/q}(1-U) - (U + \iota \Delta^{2/q}). $$
Note that under assumption (\ref{DRipMipCondition}), the right hand
side is always positive. Taking the second power on both sides,
rearranging terms, a sufficient condition for the above is
$$ (U - \iota \Delta^{2/q} )^2 + 4\iota \Delta^{2/q} - (U + \iota \Delta^{2/q})^2 <  4 \rho ^{2 - 4/q}(1-U)^2 - 4 \rho ^{1 - 2/q}(1-U) (U + \iota \Delta^{2/q}),$$
which can be rewritten as
$$ 4\iota \Delta^{2/q}(1-U) < 4 \rho ^{2 - 4/q}(1-U)[ 1 - U( 1 +  \rho ^{2/q - 1})] - 4\iota \rho ^{1 - 2/q}(1-U) \Delta^{2/q}. $$
Dividing both sides by $4(1-U)\rho ^{2 - 4/q}$ and rearranging terms, a sufficient condition for the above is (\ref{DRipMipCondition}).
From the above analysis, we have $\tilde{\theta}<1$. The proof is finished.
\end{proof}

Now, we shall choose $h=\hat{f}-f$, where $\hat{f}$ is a solution of $(\ref{LQSAnalysis})$ and
$f$ is the ``original signal" given by (\ref{LQSAnalysis}). Let ${\Omega}$ be the index set of the $s$ largest entries of $|\Psi^*f|.$ The following results can be verified by using the fact that $\hat{f}$ is a solution of $(\ref{LQAnalysis})$.
By a similar argument as that for Lemma \ref{LemSolution}, one
gets that \be\label{e_cone constraint2}
\|\Psi^*_{T^c}h\|_q^q\leq\|\Psi^*_{T}h\|_q^q+2\|\Psi^*_{\Omega^c}f\|_q^q\ee
and \be\label{errorbound2} \|A\bar{D}\Psi^*h\|_q^q\leq  m^{1-q/r} (2\varepsilon)^q. \ee

We may now conclude the proof of Theorem \ref{thm4}. Note that assumptions (\ref{e_MIPC}) and (\ref{e-cd}) imply (\ref{DRipMipCondition}) since $U$ is given by (\ref{clustercoherence}).
Therefore, by Lemma \ref{LemNSP2}, we have $\tilde{\theta} < 1.$
Combining (\ref{e_cone constraint2}) with (\ref{e_DNSP2}),
\be\label{e-3}\|\Psi^*_{T}h\|_q^q\leq\frac{\tilde{\theta}}{1-\tilde{\theta}} \left(2\|\Psi^*_{\Omega^c}f\|_q^q +  {a^{1 - q/2} \|A\bar{D}\Psi^*h\|_q^q \over \iota^{q/2}(1+\delta_{a})}\right),\ee
and
\be\label{e-2}\|\Psi^*_{T^c}h\|_q^q\leq\frac{1}{1-\tilde{\theta}} \left(2\|\Psi^*_{\Omega^c}f\|_q^q + {  \tilde{\theta}a^{1 - q/2} \|A\bar{D}\Psi^*h\|_q^q \over \iota^{q/2}(1+\delta_{a})}\right).\ee
By (\ref{SFinalEstimate2}), (\ref{PsiT01}) and noting that $\tilde{\theta}$ is given by (\ref{tildetheta}), and $\rho$ is given by (\ref{rhodelta}),
\bea\|h\|_2^2&\leq &\|\Psi_{T_{01}}^*h\|_2^2+a^{1-2/q} \|\Psi_{T^c}^*h\|_q^2\\
&\leq& \theta^{2/q}s^{1-2/q}\left(\|\Psi_{T^c}^*h\|_q^q + {a^{1-q/2} \|A \bar{D} \Psi^*h\|_q^q \over \iota^{q/2} (1 + \delta_a)}\right)^{2/q}
+\rho^{2/q-1}s^{1-2/q}\|\Psi_{T^c}^*h\|_q^2 \\
&\leq&{\theta^{2/q} + \rho^{2/q-1} \over s^{2/q -1} }  \left(\|\Psi_{T^c}^*h\|_q^q + {a^{1-q/2} \|A \bar{D} \Psi^*h\|_q^q \over \iota^{q/2} (1 + \delta_a)}\right)^{2/q}.\eea
Introducing (\ref{e-2}), and then applying (\ref{errorbound2}) and a basic inequality $(b+c)^{1/q}\leq 2^{1/q-1}(b^{1/q}+c^{1/q}),\forall b,c\geq 0$, with $\rho$ given by (\ref{rhodelta}),
\bea \|h\|_2^2
&\leq&\frac{\tilde{\theta}^{2/q}+\rho^{2/q-1}}{(1-\tilde{\theta})^{2/q}s^{2/q-1}}\left(2\|\Psi^*_{\Omega^c}f\|_q^q + { (1 + \tilde{\theta})a^{1 - q/2} \|A\bar{D}\Psi^*h\|_q^q \over \iota^{q/2}(1+\delta_{a})}\right)^{2/q}\\
&\leq&  \frac{\tilde{\theta}^{2/q}+\rho^{2/q-1}}{(1- \tilde{\theta})^{2/q}s^{2/q-1}}\left(2\|\Psi^*_{\Omega^c}f\|_q^q + { (1+\tilde{\theta}) a^{1 - q/2} m^{1-q/r} (2\varepsilon)^q \over  \iota^{q/2}(1+\delta_{a})}\right)^{2/q}\\
&\leq& \frac{(\tilde{\theta}^{2/q}+\rho^{2/q-1})}{(1-\tilde{\theta})^{2/q}} \left( {2^{2/q-1}\|\Psi^*_{\Omega^c}f\|_q \over s^{1/q-1/2}} + { 2^{1/ q} (1+\tilde{\theta})^{1/q} \rho^{1/2-1/q } m^{1/q-1/r} \varepsilon \over \sqrt{\iota}(1+\delta_{a})^{1/q}}\right)^2.\eea
Taking the square power on both sides, we get
\bea \|h\|_2 \leq \frac{2^{2/q-1}(\tilde{\theta}^{1/q}+\rho^{1/q-1/2})}{(1-\tilde{\theta})^{1/q}} {\|\Psi^*_{\Omega^c}f\|_q \over s^{1/q-1/2}}
 + \frac{2^{1/q} \rho^{1/2-1/q } (1+ \tilde{\theta})^{1/ q}(\tilde{\theta}^{1/q}+\rho^{1/q-1/2})}{\sqrt{\iota}(1-\tilde{\theta})^{1/q}(1+\delta_{a})^{1/q}}  m^{1/q-1/r} \varepsilon, \eea
which leads to the desired result (\ref{R2}).
To prove (\ref{EstimationDh2}), we first apply (\ref{e-3}), (\ref{e-2}) and then use (\ref{errorbound2}) to get
\bea \|\Psi^*h\|_q^q &=& \|\Psi_T^*h\|_q^q + \|\Psi_{T^c}^*h\|_q^q \\
&\leq& \frac{2(1+\tilde{\theta})}{1-\tilde{\theta}} \|\Psi^*_{\Omega^c}f\|_q^q +
{2 \tilde{\theta} \over \iota^{q/2}(1 - \tilde{\theta})(1+\delta_{a})} a^{1 - q/2} \|A\bar{D}\Psi^*h\|_q^q\\
&\leq& \frac{2(1+\tilde{\theta})}{1-\tilde{\theta}} \|\Psi^*_{\Omega^c}f\|_q^q +
{2^{q+1} \tilde{\theta}  \over \iota^{q/2} (1 - \tilde{\theta})(1+\delta_{a})} a^{1 - q/2} m^{1-q/r} \varepsilon^q, \eea
which leads to the desired result (\ref{EstimationDh2}). The proof is finished.
\begin{rem}
\begin{enumerate}
  \item  In the proof, we have proved that suitable conditions on the modified $q$-RIP and the mutual coherence
  between the dictionaries $D_1,D_2,\cdots,D_{\iota}$ imply that $A\bar{D}\Psi$
  satisfies $\Psi$-NSP$_q$, i.e., $$
\|\Psi_{T}^*h\|_q^q\leq \tilde{\theta}\|\Psi_{T^c}^*h\|_q^q,$$ for all $h\in
\ker (A\bar{D}\Psi)$ and all $T\subset[d]$ with $|T|\leq s$, where
$\tilde{\theta}<1.$ Now we can define the null space property for compressed
data separation as follow: For $q\in(0,1]$, $A$ is said to satisfy
the $l_q$-split null space property with respect to dictionaries $D_1$
 $D_2,\cdots,D_{\iota}$ of order $s$ if there exists some $\theta \in [0,1)$ such that
$$
\sum_{k=1}^{\iota} \|D_{kT_k}^*h_k\|_q^q \leq\theta \sum_{k=1}^{\iota} \|D_{kT_k^c}^*h_k\|_q^q,$$
for all $h_1,h_2,\cdots,h_{\iota}$ such that $(\sum_{k=1}^{\iota}h_k) \in\ker(A)$ and all
$T_1\subset[d_1]$, $T_2\subset[d_2],\cdots,T_{\iota} \in [d_{\iota}]$ with $\sum_{k=1}^{\iota}|T_k|\leq s$. Note that the null space property for standard
compressed sensing is one of the well known conditions on
measurement matrices (e.g. \cite{GN07,CDD09,CZ12}). Here, we provide
a definition of null space property for compressed data separation,
which may be of independent interest.
  \item From the proof, we see that the following inequality holds $$\|\Psi^*\hat{f}-\Psi^*f\|_2\leq
C_1\frac{\|\Psi^*f-(\Psi^*f)_{[s]}\|_q}{s^{1/q-1/2}} + C_2 m^{1/q-1/r} \varepsilon .$$
\end{enumerate}
\end{rem}

\section{Numerical realization and discussion}
In this final section, we discuss numerical realization of the constrained $l_q$ analysis $(P_q)$,
and provide further discussions on our theoretical analysis.

The constrained $l_q$ analysis problem $(P_q)$ proposed to recover $f$ is nonconvex.
Due to its nonconvexity, finding a global minimizer of problem $(P_q)$ is generally NP-hard.
We thus solve such a nonconvex problem by solving a sequence of
convex problems, as often done in standard compressed sensing for standard $l_q$ minimization, e.g., \cite{FL09,DDFG10}. The first possible method is the iteratively reweighted $l_1$ analysis which iteratively solves the following weighted $l_1$ analysis (IRL1)
$$
f^{j+1} = \argmin_{\tilde{f} \in \mR^n} \sum_{i=1}^d \omega_i^j |\la d_i,\tilde{f}\ra| \qquad \mbox{subject to}\quad A\tilde{f} = y,
$$
where $d_i$ is the $i$-th column of $D$, $\omega_i^j = (|\la d_i,f^j\ra| + \varsigma_j)^{q-1}$ and $(\varsigma_j)$ a nonincreasing sequence of positive numbers. Another potential method is iteratively
reweighted least squares (IRLS), which iteratively solves the following
weighted least square problem
$$
f^{j+1} = \argmin_{\tilde{f} \in \mR^n} \sum_{i=1}^d \omega_i^j |\la d_i,\tilde{f}\ra|^2 \qquad \mbox{subject to}\quad A\tilde{f} = y,
$$
where $\omega_i^j = (|\la d_i,f^j\ra|^2 + \varsigma_j)^{q/2-1}$.
The key of the above methods lie in that the object functions in both methods are approximations to $\|D^*\tilde{f}\|_q^q$ when  $\varsigma \to 0$. In fact, one may derive convergence results for both these methods using the techniques developed in this paper and in \cite{FL09,DDFG10}. We postpone the details for a future work. We here provide a simple simulation, demonstrating that the analysis IRLS can solve the constrained $l_q$ analysis $(P_q)$ efficiently. In this simulation, we let $n=100,$ $d=110$, $m = 50$, $q=0.7$ and the sparsity of $D^*f$ as $s = 25$.
The entries in the $m \times n$ measurement matrix $A$ were randomly generated according to a normal distribution. The $n \times d$ matrix $d$ is a random tight frame, generated by the approach in \cite{NDEG13}. Figure 1 shows that the analysis IRLS reconstructs the signal $f$ exactly.

\begin{figure}
    \centering
  \includegraphics[width=1.0\textwidth]{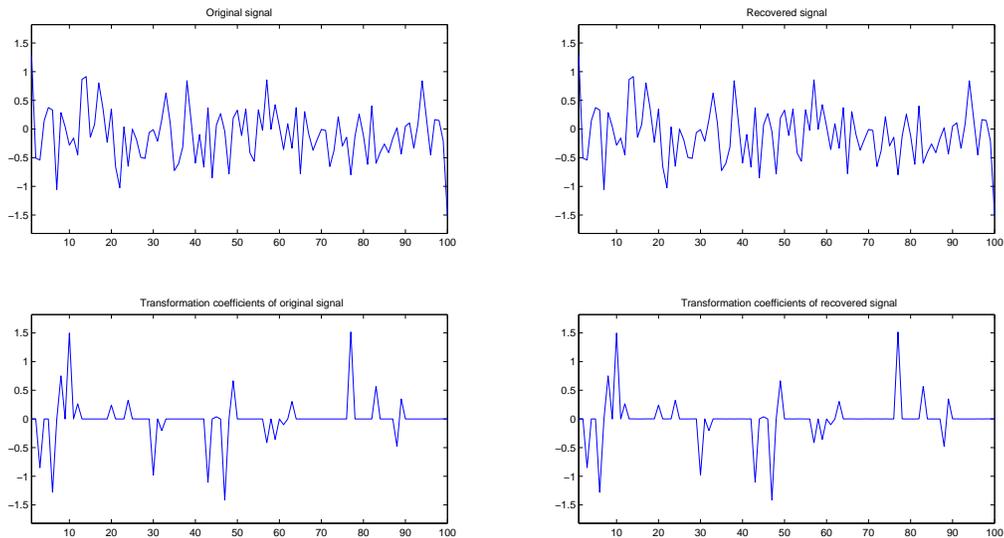}
   \caption{Reconstruction via analysis IRLS}
\end{figure}

In this paper, we discussed sparse recovery with general frames
from random measurements via the $l_q$-analysis optimization ($P_q$) with
$0<q\leq 1$. We introduced a notion of $(D,q)$-RIP. It is a natural extension of
the standard $q$-RIP defined by Chartrand and Staneva
\cite{CS08} for standard compressed sensing,
and is different from the $D$-RIP defined in (\ref{StandardDRIP}).
We established an $(D^{\dag},q)$-RIP guarantee result for the $l_q$-analysis
optimization ($P_q$). We proved the result by investigating the relationship
between the $(D^{\dag},q)$-RIP constant and the $D$-NSP$_q$ constant, which at the same time
may shed some lights on how to establish a tighter relationship
between the $D^{\dag}$-RIP constant and the $D$-NSP$_q$ constant.
Subsequently, we showed how many random Gaussian
measurements are needed for the $(D^{\dag},q)$-RIP condition to hold with high
probability. Finally, we discussed compressed data
separation by using the introduced $(D,q)$-RIP.
We showed that under an usual assumption that the two
dictionaries satisfy a mutual coherence condition, the $l_q$ split
analysis with $0<q\leq1 $ can approximately reconstruct the distinct
components from fewer random Gaussian measurements with small $q$
than when $q=1$. Our results provide theoretical basis for further designing
algorithm to solve ($P_q$) and the $l_q$ split
analysis optimization. Our proof techniques may shed some lights on improving the previous
$D$-RIP guarantee results. Further issues are to design
numerical methods (e.g., iteratively reweighted method) to find approximate solutions of the $l_q$-analysis optimization applied in practical applications,
and to consider other
random measurements instead of Gaussian measurements.

\end{document}